\newcommand{\mset}[1]{\left\{\!\!\left\{#1\right\}\!\!\right\}}
\DeclareMathOperator{\koenig}{K}
\newtheorem{definition}{Definition}
\newtheorem{lemma}[definition]{Lemma}
\newcommand{\supp}{\operatorname{supp}}
\newcommand{\AX}[1]{\textnormal{#1}}
\tikzset{eMorphism/.style={->, >=stealth}}
\tikzset{eIsomorphism/.style={double, double equal sign distance}}
\tikzset{vDot/.style={fill,minimum size=0, inner sep=1pt, circle, outer sep=0}}
\tikzset{eDot/.style={}}
\tikzset{eDotMorphism/.style={->, >=stealth, dashed, red}}
\tikzset{gDot/.style={node distance=1em and 1em}}
\tikzset{gCat/.style={node distance=2em and 2em}}
\tikzset{vCat/.style={draw, minimum size=2.5em}}
\tikzset{hnodeNoDraw/.style={ellipse,minimum height=12, minimum width=17}}
\tikzset{hnode/.style={draw,hnodeNoDraw}}
\tikzset{hxnode/.style={hnode, minimum size=50}}
\tikzset{tnode/.style={draw,circle,fill=black,inner sep=0,minimum size=3,text height=0ex,text depth=0ex}}
\tikzset{hedgeNoDraw/.style={rectangle,minimum height=8, minimum width=8}}
\tikzset{hedge/.style={draw,hedgeNoDraw}}
\tikzset{edge/.style={->,>=stealth',thick}}
\tikzset{tedge/.style={->,>=stealth'}}
\newcommand{\multiedge}[4][20]{
\pgfmathsetmacro{\num}{#4-1}
\foreach \i in {0, ..., \num}{
	\pgfmathsetmacro{\angleSep}{40}
	\pgfmathsetmacro{\angle}{\angleSep*(\i-#4/2+.5)}
	\draw[edge] (#2) to [bend right=\angle] (#3);
}
}
\newcommand\ioEdgeDist{0.9}
\newcommand\isomerizationOffset{15}
\newcommand\makeIO[5][\ioEdgeDist]{{
	\newcommand{\Angle}{#3}
	\newcommand{\Anchor}{#2}
	\newcommand{\Dist}{#1}
	\newcommand{\InText}{#4}
	\newcommand{\OutText}{#5}
	\newcommand\Offset\isomerizationOffset
	\draw[edge] (\Anchor.\Angle+\Offset) to node[auto, every label]{\OutText} ($(\Anchor.\Angle+\Offset) + (\Angle:\Dist)$);
	\draw[edge] ($(\Anchor.\Angle-\Offset) + (\Angle:\Dist)$) to node[auto, every label]{\InText} (\Anchor.\Angle-\Offset);
}}
\tikzset{
	matrixInnerSep/.style={inner sep=0, nodes={inner sep=.3333em)}},
	every matrix/.style={ampersand replacement=\&},
	every node/.style={font=\footnotesize}
}
\newcommand\autocataExBegin{%
	\begin{tikzpicture}
	\def\labAB{$e_1$}
	\def\labBA{$e_2$}
	\def\labBC{$e_3$}
	\def\labCAB{$e_4$}
	\def\labED{$e_5$}
	\def\labAI{$e_A^-$}\def\labAO{$e_A^+$}
	\def\labBI{$e_B^-$}\def\labBO{$e_B^+$}
	\def\labCI{$e_C^-$}\def\labCO{$e_C^+$}
	\def\labDI{$e_D^-$}\def\labDO{$e_D^+$}
	\def\labEI{$e_E^-$}\def\labEO{$e_E^+$}
	\tikzset{
		styleA/.style={},
		styleB/.style={},
		styleC/.style={},
		styleAB/.style={},	styleBA/.style={},
		styleBC/.style={},	styleBBC/.style={},	styleBCB/.style={},
		styleCAB/.style={},	styleCCAB/.style={},	styleCABA/.style={},	styleCABB/.style={},
	}
}
\newcommand\autocataExEnd{%
	\end{tikzpicture}%
}
\newcommand\autocataEx{%
	\matrix[row sep=20, column sep=20, matrixInnerSep] {
		\node[hnode] (E) {$E$};\& \node[hnode] (D) {$D$};		\\
		\node[hnode, styleB] (B) {$B$};	\& \node[hedge, styleBC, label=below:{\labBC}] (BC) {};	\& \node[hnode, styleC] (C) {$C$};	\\
		\node[hnode, styleA] (A) {$A$};\&\& \node[hedge, styleCAB, label=below:{\labCAB}] (CAB) {};		\\
	};
	
	\draw[edge, styleAB] (A.90+\isomerizationOffset) to node[auto, every label] {\labAB} (B.-90-\isomerizationOffset);
	\draw[edge, styleBA] (B.-90+\isomerizationOffset) to node[auto, every label] {\labBA} (A.90-\isomerizationOffset);
	\draw[edge] (E) to node[auto, every label] {\labED} (D);
	
	\draw[edge, styleBBC] (B) to (BC);\draw[edge, styleBCB] (BC) to (C);
	\multiedge{D}{BC}{2}
	\draw[edge, styleCCAB] (C) to (CAB);\draw[edge, styleCABA] (CAB) to (A);\draw[edge, styleCABB] (CAB) to (B);
}
\newcommand\autocataExRestricted{%
	\matrix[row sep=20, column sep=20, ampersand replacement=\&] {
		\node[hnode] (B) {$B$};	\& \node[hedge, label=below:{\labBC}] (BC) {};	\& \node[hnode] (C) {$C$};	\\
		\&\& \node[hedge, label=below:{\labCAB}] (CAB) {};		\\
	};

	\draw[edge, styleBBC] (B) to (BC);
	\draw[edge, styleBCB] (BC) to (C);
	\multiedge{C}{CAB}{1}\multiedge{CAB}{B}{1}
}
\newcommand\autocataExAllIO{
	\makeIO{A}{180}{\labAI}{\labAO}
	\makeIO{B}{180}{\labBI}{\labBO}
	\makeIO{C}{0}{\labCI}{\labCO}
	\makeIO{D}{0}{\labDI}{\labDO}
	\makeIO{E}{180}{\labEI}{\labEO}
}
\newcommand\autocataExExpPruned{
	\matrix[row sep=16, column sep=16] {
		\node[hxnode,label=above:$E$] (E) {};		\& \node[hxnode,label=above:$D$] (D) {};       \\
		\node[hxnode, label={[overlay]above:$B$}] (B) {};	\& \node[hedge, label=below:{\labBC}] (BC) {}; \& \node[hxnode,label={[overlay]above:$C$}] (C) {};    \\
		\node[hxnode, label=below:$A$] (A) {}; \&\& \node[hedge, label=below:{\labCAB}] (CAB) {};       \\
	};
	
	\makeShortcutEdge{A}{90+\isomerizationOffset}{B}{-90-\isomerizationOffset}{\labAB}
	\makeShortcutEdge{B}{-90+\isomerizationOffset}{A}{90-\isomerizationOffset}{\labBA}
	\makeShortcutEdge{E}{\isomerizationOffset}{D}{180-\isomerizationOffset}{\labED}
	
	\makeEdge{D/-90/\isomerizationOffset, D/-90/-\isomerizationOffset, B/0/0}{BC}{C/180/0}
	\makeEdge{C/-90/0}{CAB}{A/0/0, B/-35/2}
}
\newcommand\autocataExExp{
	\autocataExExpPruned
}
\newcommand\makeIOExp[5][\ioEdgeDist]{{
	\makeIO[#1]{#2}{#3}{#4}{#5}
	\newcommand{\Angle}{#3}
	\newcommand{\Anchor}{#2}
	\newcommand\Offset\isomerizationOffset
	\node[tnode] (t-\Anchor-out-IO) at (\Anchor.\Angle+\Offset) {};
	\node[tnode] (t-\Anchor-in-IO) at (\Anchor.\Angle-\Offset) {};
}}
\newcommand\makeIExp[4][\ioEdgeDist]{{
	\newcommand{\Angle}{#3}
	\newcommand{\Anchor}{#2}
	\newcommand{\Dist}{#1}
	\newcommand{\InText}{#4}
	\newcommand\Offset\isomerizationOffset
	\draw[edge] ($(\Anchor.\Angle-\Offset) + (\Angle:\Dist)$) to node[auto, every label]{\InText} (\Anchor.\Angle-\Offset);
	\node[tnode] (t-\Anchor-in-IO) at (\Anchor.\Angle-\Offset) {};
}}
\newcommand\makeOExpStyle[5][\ioEdgeDist]{{
	\newcommand{\Angle}{#3}
	\newcommand{\Anchor}{#2}
	\newcommand{\Dist}{#1}
	\newcommand{\OutText}{#4}
	\newcommand\Offset\isomerizationOffset
	\draw[edge, #5] (\Anchor.\Angle+\Offset) to node[auto, every label]{\OutText} ($(\Anchor.\Angle+\Offset) + (\Angle:\Dist)$);
	\node[tnode, #5] (t-\Anchor-out-IO) at (\Anchor.\Angle+\Offset) {};
}}
\newcommand\autocataExExpAllIO{
	\makeIOExp{A}{180}{\labAI}{\labAO}
	\makeIOExp{B}{180}{\labBI}{\labBO}
	\makeIOExp{C}{0}{\labCI}{\labCO}
	\makeIOExp{D}{0}{\labDI}{\labDO}
	\makeIOExp{E}{180}{\labEI}{\labEO}
}
\newcommand{\makeShortcutEdge}[5]{{
    \newcommand{\Source}{#1}
    \newcommand{\Target}{#3}
    \newcommand{\SourceAngle}{#2}
    \newcommand{\TargetAngle}{#4}
    \newcommand{\lab}{#5}
    \node[tnode] (t-\Source-out-sc-\Target) at (\Source.\SourceAngle) {};
    \node[tnode] (t-\Target-in-sc-\Source) at (\Target.\TargetAngle) {};
    \draw[edge] (t-\Source-out-sc-\Target) to node[auto, every label] {\lab} (t-\Target-in-sc-\Source);
}}
\newcommand{\makeShortcutEdgeStyle}[6]{{
    \newcommand{\Source}{#1}
    \newcommand{\Target}{#3}
    \newcommand{\SourceAngle}{#2}
    \newcommand{\TargetAngle}{#4}
    \newcommand{\lab}{#5}
    \node[tnode, #6] (t-\Source-out-sc-\Target) at (\Source.\SourceAngle) {};
    \node[tnode, #6] (t-\Target-in-sc-\Source) at (\Target.\TargetAngle) {};
    \draw[edge, #6] (t-\Source-out-sc-\Target) to node[auto, every label] {\lab} (t-\Target-in-sc-\Source);
}}
\newcommand{\makeEdge}[3]{{
    \newcommand{\Sources}{#1}
    \newcommand{\EdgeNode}{#2}
    \newcommand{\Targets}{#3}
    \foreach \Source/\SourceAngle/\Bend in \Sources {
        \node[tnode] (t-\Source-out-\EdgeNode) at (\Source.\SourceAngle) {};
        \draw[edge] (t-\Source-out-\EdgeNode) to [bend right=\Bend] (\EdgeNode);
    }
    \foreach \Target/\TargetAngle/\Bend in \Targets {
        \node[tnode] (t-\Target-in-\EdgeNode) at (\Target.\TargetAngle) {};
        \draw[edge] (\EdgeNode) to [bend right=\Bend] (t-\Target-in-\EdgeNode);
    }
}}
\newcommand{\makeEdgeStyle}[4]{{
    \newcommand{\Sources}{#1}
    \newcommand{\EdgeNode}{#2}
    \newcommand{\Targets}{#3}
    \foreach \Source/\SourceAngle/\Bend in \Sources {
        \node[tnode, #4] (t-\Source-out-\EdgeNode) at (\Source.\SourceAngle) {};
        \draw[edge, #4] (t-\Source-out-\EdgeNode) to [bend right=\Bend] (\EdgeNode);
    }
    \foreach \Target/\TargetAngle/\Bend in \Targets {
        \node[tnode, #4] (t-\Target-in-\EdgeNode) at (\Target.\TargetAngle) {};
        \draw[edge, #4] (\EdgeNode) to [bend right=\Bend] (t-\Target-in-\EdgeNode);
    }
}}
\title{Defining Autocatalysis in Chemical Reaction Networks}
\newcommand\email[1]{\texttt{#1}}
\author[1]{Jakob L.\ Andersen}
\author[2]{Christoph Flamm}
\author[1]{Daniel Merkle}
\author[2-6]{Peter F.\ Stadler}
\affil[1]{Department of Mathematics and Computer Science, University of Southern Denmark, Campusvej 55, Odense M - DK-5230, Denmark
	\email{\{jlandersen,daniel\}@imada.sdu.dk}}
\affil[2]{Institute for Theoretical Chemistry, University of Vienna, W{\"a}hringerstra{\ss}e 17, A-1090 Wien, Austria
	\email{xtof@tbi.univie.ac.at}}
\affil[3]{Bioinformatics Group, Department of Computer Science;
  Interdisciplinary Center for Bioinformatics; German Centre for
  Integrative Biodiversity Research (iDiv) Halle-Jena-Leipzig; Competence
  Center for Scalable Data Services and Solutions Dresden-Leipzig; Leipzig
  Research Center for Civilization Diseases; and Centre for Biotechnology
  and Biomedicine, University of Leipzig, H{\"a}rtelstra{\ss}e 16-18,
  D-04107 Leipzig, Germany
  \email{studla@bioinf.uni-leipzig.de}}
\affil[4]{Max Planck Institute for Mathematics in the Sciences, Inselstra{\ss}e 22, D-04103 Leipzig, Germany}
\affil[5]{Facultad de Ciencias, Universidad Nacional de Colombia, Sede Bogot{\'a}, Colombia}
\affil[6]{The Santa Fe Institute, 1399 Hyde Park Rd., Santa Fe, NM 87501, United States}
\date{}
\begin{document}
\maketitle
\clearpage

\begin{abstract}
  Autocatalysis is a deceptively simple concept, referring to the situation
  that a chemical species $X$ catalyzes its own formation. From the
  perspective of chemical kinetics, autocatalysts show a regime of
  super-linear growth.  Given a chemical reaction network, however, it is
  not at all straightforward to identify species that are autocatalytic in
  the sense that there is a sub-network that takes $X$ as input and
  produces more than one copy of $X$ as output. The difficulty arises from
  the need to distinguish autocatalysis e.g.\ from the superposition of
  a cycle that consumes and produces equal amounts of $X$ and a pathway that
  produces $X$. To deal with this issue, a number of competing notions,
  such as exclusive autocatalysis and autocatalytic cycles, have been
  introduced. A closer inspection of concepts and their usage by different
  authors shows, however, that subtle differences in the definitions often
  makes conceptually matching ideas difficult to bring together formally.
  In this contribution we make some of the available approaches comparable
  by translating them into a common formal framework that uses integer
  hyperflows as a basis to study autocatalysis in large chemical reaction
  networks. As an application we investigate the prevalence of
  autocatalysis in metabolic networks.
\end{abstract}


\section*{Introduction}

The idea of autocatalysis is deceptively simple. A chemical reaction is
autocatalytic whenever one of its educts catalyzes its own formation, i.e.,
\begin{equation*}
\ce{(A) + X -> 2X + (W)}
\end{equation*}
where \ce{(A)} and \ce{(W)} denote some sets of extra building material and
waste products, respectively. One of the few autocatalytic reactions that
is of this simple form is the Soai reaction \cite{Soai:95}, an alkylation
of pyrimidine-5-carbaldehyde with diisopropylzinc. Here, each enantiomer of
the product catalyzes only the formation of the same enantiomer.

The concept of autocatalysis goes back to Ostwald \cite{Ostwald:1890}. In
the context of chemical kinetics, autocatalysis refers to a temporary
speed-up of the reaction before it settles down to reach equilibrium, see
e.g.\ \cite{Bissette:13,Schuster:19} for a recent review. In most cases
this leads to characteristic sigmoidal time courses. Autocatalysis may also
be associated with more complex dynamic behavior, such as oscillations.

Maybe the best-known example of an autocatalytic reaction is the hydrolysis
of esters, which is catalyzed by the acid that is one of the reaction
products. Even in this simple case, however, we better understand its
autocatalytic nature as a generic acid catalysis of the cleavage reaction
\begin{equation*}
\ce{H^+ + R-COOR^' + H2O -> R-COOH + HO-R^' + H^+}
\end{equation*}
and the dissociation of the acid
\begin{equation*}
\ce{R-COOH <=> R-COO^- + H^+}
\end{equation*}
Of course, the cleavage reaction itself consists of multiple steps, none of
which is overtly catalytic \cite{Bansagi:17}. The mechanisms by which
\ce{Mn^2+} catalyzes the oxidation of oxalate by permanganate in this
classical example of autocatalysis is much less obvious and can be
explained only by an elaborate network of reactions
\cite{Kovacs:04a,Kovacs:04b}.

\begin{figure}
  \centering
  \includegraphics[width=\columnwidth]{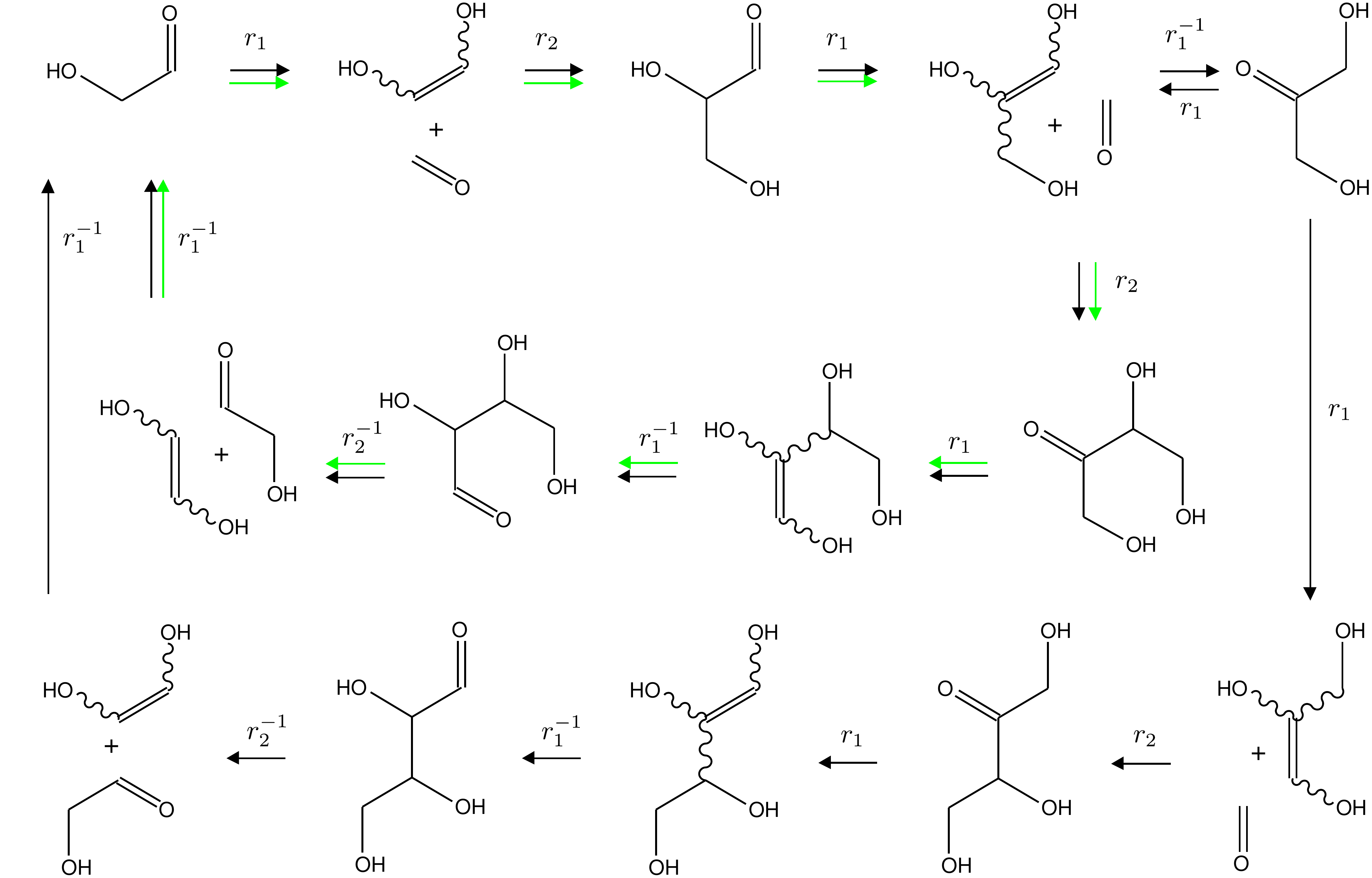}
  \caption{The formose reaction can be understood as a cyclic process
    involving keto-enol tautomerisation ($r_1$ and $r_1^{-1}$),
    aldol-condensation ($r_2$) and reverse aldol reaction $r_2^{-1}$.  The
    inner cycle (green arrows) follows \cite{Benner:10}. The outer cycle is the more
    commonly discussed mechanisms \cite{Breslow:59}. Figure taken from
    \cite{Andersen:14b}.}
  \label{fig:Formose}
\end{figure}

In fact, one of the earliest autocatalytic reactions reported in the
  literature, the Formose reaction \cite{Butlerow:1861}, is a reasonably
well-understood example of ``network autocatalysis''. The simple,
autocatalytic overall reaction
\begin{equation*}
\ce{HOCH2CHO + 2 H2CO -> 2 HOCH2CHO}
\end{equation*}
has been understood as the net effect of the example reaction network
shown in Fig.\ \ref{fig:Formose}. Several well-known oscillating
reactions, including the Belousov-Zhabotinsky (BZ) reaction
\cite{Treindl:97}, are also elaborate examples of network
autocatalysis. In fact, the core of the BZ reaction harbors two
autocatalytic cycles, one feeding on the other in a
\textit{predator-prey} like fashion, resulting in the Lotka-Volterra
type oscillatory dynamics \cite{Ganti:1984}. The Whitesides group
recently designed an autocatalytic network comprising only a few
simple organic compounds that displays oscillatory behavior
\cite{Semenov:2016}. A computational study identified coupled
autocatalytic cycles in the chemical networks of Eschenmoser's
glyoxylate scenario \cite{Andersen:2015}. For the distinction of
catalytic and autocatalytic cycles, see Fig.\ \ref{fig:xtofcyc}.

\begin{figure}
\centering
\subcaptionbox{\label{fig:xtofcyc:cata}}{
\resizebox{0.45\linewidth}{!}{
\begin{tikzpicture}
\foreach \a\l in {1/A, 2/D, 3/C, 4/B}
{ \draw (\a*360/4: 2cm) node[hnode,name=X\l]{\l}; }
\foreach \a\l in {1/4, 2/3, 3/2, 4/1}
{ \draw (\a*360/4+360/8: 2cm) node[hedge,name=R\l]{$r_\l$}; }
\foreach \a\l [evaluate={\ai=int(mod(\a,2));}]
         in {1/a, 2/a, 3/d, 4/d, 5/c, 6/c, 7/b, 8/b}
{
 \ifnum\ai>0
   \draw (\a*360/8-360/16: 3cm) node[hnode,name=Yp\l]{$\l'$};
 \else
   \draw (\a*360/8-360/16: 3cm) node[hnode,name=Y\l]{$\l\phantom{'}$};
 \fi
}
\foreach \x\y in {XA/R1, R1/XB, XB/R2, R2/XC, XC/R3, R3/XD, XD/R4, R4/XA}
{ \draw[edge] (\x) to (\y); }
\foreach \x\y in {Ya/R1, R1/Ypa, Yb/R2, R2/Ypb, Yc/R3, R3/Ypc, Yd/R4, R4/Ypd}
{ \draw[edge] (\x) to (\y); }
\end{tikzpicture}%
}
}
\hfill
\subcaptionbox{\label{fig:xtofcyc:autocata}}{
\resizebox{0.45\linewidth}{!}{
\begin{tikzpicture}
\foreach \a\l in {1/A, 2/D, 3/C, 4/B}
{ \draw (8,0) ++(\a*360/4: 2cm) node[hnode,name=X\l]{\l}; }
\foreach \a\l in {1/4, 2/3, 3/2, 4/1}
{ \draw (8,0) ++(\a*360/4+360/8: 2cm) node[hedge,name=R\l]{$r_\l$}; }
\foreach \a\l [evaluate={\ai=int(mod(\a,2));}]
         in {1/a, 2/a, 3/d, 4/d, 5/c, 6/c, 7/b, 8/b}
{
 \ifnum\ai>0
   \ifnum\a<7
    \draw (8,0) ++(\a*360/8-360/16: 3cm) node[hnode,name=Yp\l]{$\l'$};
   \fi
 \else
   \draw (8,0) ++(\a*360/8-360/16: 3cm) node[hnode,name=Y\l]{$\l\phantom{'}$};
 \fi
}
\foreach \x\y in {XA/R1, R1/XB, XB/R2, R2/XC, XC/R3, R3/XD, XD/R4, R4/XA}
{ \draw[edge] (\x) to (\y); }
\foreach \x\y in {Ya/R1, R1/Ypa, Yb/R2, Yc/R3, R3/Ypc, Yd/R4, R4/Ypd}
{ \draw[edge] (\x) to (\y); }
\draw[edge,sol-cyan,thick] (R2) to[out=135,in=90] (XC);
\draw[edge,sol-cyan,thick] (R2) to[out=135,in=0] (XD);
\draw[edge,sol-cyan,thick] (R2) to[out=135,in=270] (XA);
\draw[edge,sol-cyan,thick] (R2) to[out=135,in=180] (XB);
\end{tikzpicture}%
}
}
\caption{Catalytic and autocatalytic cycle. \subref{fig:xtofcyc:cata} A catalytic cycle
  is shown. A set of educts $\{a,b,c,d\}$ is converted into a set of
  products $\{a',b',c',d'\}$ by a set of species $\{A,B,C,D\}$ which acts as
  as catalysts.  \subref{fig:xtofcyc:autocata} If one of the reactions in the catalytic cycle
  (here $r_2$) additionally produces a copy of the species in the cycle (indicated by
  the green arrows), then the catalytic cycle becomes
  autocatalytic. Green arrows may also represent multi-step reaction
  sequences.}
\label{fig:xtofcyc}
\end{figure}
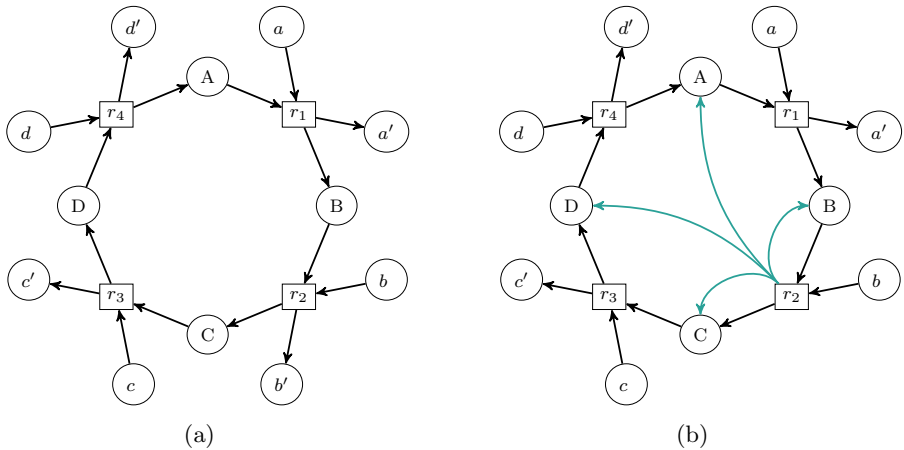

The concept of autocatalysis plays an important role in metabolic networks.
In this context, one frequently speaks of \emph{autocatalytic
  pathways}, which contain reactions that consume some of the pathway's
products. This results in the positive feedback, which in turn explains
their characteristic dynamic behavior \cite{Kun:08,Siami:20}. A
paradigmatic example is glycolysis, which invests two ATP molecules to
later produce four.

Autocatalysis plays a key role in most models of the origin of
life. Replicating entities --- by definition --- are autocatalytic. First
described theoretically by Manfred Eigen \cite{Eigen:71}, it was soon shown
that short nucleic acid templates can be copied, e.g., by ligation of short
fragments without the help of enzymes \cite{Sievers:94}. Alternative
models, such as self-replicating peptides \cite{Lee:96} or lipid aggregates
\cite{Segre:01} follow the same logic. Tibor G{\'a}nti
\cite{Ganti:76,Ganti:76b} early-on emphasized the importance of
autocatalytic cycles. In order to explain the emergence of replicators,
``collectively autocatalytic'' networks of interacting molecules have been
proposed as precursors of replicating polymers \cite{Kauffman:86}. These
chemical reaction networks (CRNs) contain molecules that promote their own
synthesis, forming chemical organizations \cite{Kaleta:06,Benkoe:09a}.  A
distinct concept of ``autocatalytic networks'' refers to interacting
autocatalytic replicators generalizing the hypercycle model of Eigen and
Schuster \cite{Eigen:79,Stadler:92a,BMRStadler:00b}. It describes systems
of self-replicating entities rather than chemical reactions of small
molecules.

A popular mathematical model of autocatalytic reaction networks are the
\emph{Reflexively Autocatalytic Food generated} networks (RAFs) by Steel
and Hordijk \cite{Steel:00,Hordijk:04}. Similar to chemical organizations,
all chemical species in a RAF $\mathcal{R}$ can be produced from the food
or other elements of $\mathcal{R}$ \cite{Hordijk:18}. The model is
mathematically much easier to handle than arbitrary CRNs because one
considers only reactions of the form
\begin{equation*}
  \ce{C +} \sum_{i} s_i \ce{A}_i \ce{->} \sum_{j} s'_j \ce{B}_j \ce{+ C} 
\end{equation*}
That is, every reaction is catalyzed by some of the species. A RAF set
$\mathcal{R}$ thus also contains a sufficient set of catalysts. While RAF
theory is a plausible description, e.g., of ligation networks of simple
polymers \cite{Virgo:16,Liu:18}, it does not seem to be a realistic
description of reaction networks of small molecules. Here, the assumption
that all reactions are catalyzed appears very unrealistic. Unfortunately,
the algorithms for recognizing RAFs \cite{Hordijk:15,Steel:19} do not seem
to generalize to arbitrary networks composed of non-catalyzed
reactions. RAFs are not necessarily meant to model concrete chemical
reactions but rather aggregate transformations.  The RAF formalism
coarse-grains the elementary steps of a catalytic \emph{process} and
replaces them by a single influence arrow. This is a valid abstraction if
enzymes or other large polymeric entities are the catalysts because they
are molecular machines that encapsulate or sequester the individual steps
and thus separate the catalytic process from the rest of the system. It is
not an appropriate approximation for networks of small (prebiotic)
molecules. Here, the intermediates are accessible for alternative
reactions. A specific catalytic influence beyond global effects (such as
changes in pH or ionic strength) in a CRN is itself a chemical reaction. It
remains an open question, therefore, under which conditions a given CRN can
be abstracted into the RAF formalism. We suspect that small molecule CRNs
are not of this type (recent attempts notwithstanding \cite{Xavier:20}),
limiting RAFs to the realm of macromolecular and supramolecular
complexes.

In this contribution, we therefore seek to develop a theory that can be used
to identify autocatalytic structures in a given CRN, i.e., a system of
chemical reaction equations. To this end we first need to introduce a sound
mathematical framework. This is less trivial than it might seem. The
notion of autocatalysis in chemical kinetics is difficult to use in a
network setting since it strongly depends on the actual choice of rate
constants. A natural starting point for a theory of autocatalytic CRNs is
to ask for sub-networks for which the rate constants can be chosen
such that it shows autocatalytic kinetics.  The kinetic criterion, however,
is also not entirely unambiguous as we shall see.

\section*{Towards a Structural Theory of Autocatalysis}

\subsection*{Directed Multi-Hypergraphs}
A CRN consists of a set of molecules $V$ and set of reactions $E$ such
that every $e\in E$ is of the form
\begin{align*}
  \sum_{x\in V} s^+_{xe} x \ce{->} \sum_{x\in V} s^-_{xe} x
\label{eq:CRN}
\end{align*}
Note that we regard all reactions as directed. Reversible reactions
therefore are represented by a separate forward and backward reaction. This
will allows us to use non-negative flows and connects naturally with
graph transformations as a means of generating chemical reactions.

Following the notation of \cite{Andersen:19a}, a CRN is naturally
represented as a directed multi-hypergraph $\mathcal{H} = (V,E)$ where each
hyperedge $e\in E$ is a pair of multisets
\begin{align*} 
e^+:=\mset{x\mid s^+_{xe}>0} \quad\text{and}\quad
e^-:=\mset{x \mid s^-_{xe}>0}
\end{align*}
The notation $\mset{\dots}$ emphasizes that we are dealing with multisets,
where an element can be contained more than once. We write $m_x(\,.\,)$ for
these \emph{multiplicities}, which in our case are given by the
stoichiometric coefficients: $m_x(e^+) = s_{xe}^+$ and
$m_x(e^-) = s_{xe}^-$. We call $e^+$ the \emph{tail} and $e^-$ the
\emph{head} of the directed hyperedge. See Fig.\ \ref{fig:hyperedge} for an
example.

\begin{figure}
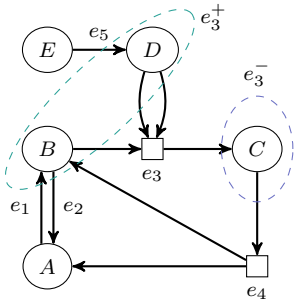

\centering
\autocataExBegin
\autocataEx
\pgfmathanglebetweenpoints{\pgfpointanchor{B}{center}}{\pgfpointanchor{D}{center}}
\let\nodeAngle\pgfmathresult
\node[at=($(B)!0.5!(D)$), draw=sol-cyan, ellipse, dashed, rotate=\nodeAngle, minimum width=95, minimum height=27, label=-10:{$e_3^+$}] {};
\node[at=(C), draw=sol-violet, ellipse, dashed, minimum width=27, minimum height=40, label=above:{$e_3^-$}] {};
\autocataExEnd
\caption{Example of a directed multi-hypergraph $\mathcal{H}$ with vertices
  $V = \{A, B, C, D, E\}$ and hyperedges $E = \{e_1, e_2, e_3, e_4, e_5\}$,
  which we will use as a running example to illustrate concepts.  Parallel
  arrows represent the multiplicity of a vertex in a tail/head multiset.
  For example, the hyperedge $e_3$ is defined as the pair $(e_3^+, e_3^-)$
  with tail $e_3^+ = \mset{B, D}$ (the cyan ellipse)
  and head $e_3^- = \mset{C}$ (the violet ellipse).
  The multiplicities for the tail/head memberships are $m_B(e_3^+) = 1$, $m_D(e_3^+) = 2$ and $m_C(e_3^-) = 1$.
  To reduce clutter we often depict a hyperedge with
  a single tail and head vertex as a lonely arrow without a box (here
  $e_1$, $e_2$, and $e_5$).  The hypergraph represents the reactions \ce{A
    <=> B}, \ce{B + 2 D -> C}, \ce{C -> A + B}, and \ce{E -> D}.  }
\label{fig:hyperedge}
\end{figure}

Every directed multi-hypergraph $\mathcal{H}=(V,E)$
has a faithful representation as a bipartite multi-digraph with vertex set
$V' = V\cup E$ and a multiset of edges
\begin{equation}
  \begin{split}
  E' = &\mset{(v, e)\mid e = (e^+, e^-)\in E, v\in e^+} \\
  \cup &\mset{(e, v)\mid e = (e^+, e^-)\in E, v\in e^-}
  \end{split}
\end{equation}
with multiplicities of arcs given by the stoichiometric coefficients.  We
will refer to this as the \emph{K{\"o}nig representation}
$\koenig(\mathcal{H})$ of $\mathcal{H}$.

In our discussion we also need restrictions of directed hypergraphs to
subsets of vertices and hyperedges in the following way.  For
$V'\subseteq V$ and $E'\subseteq E$ let $\mathcal{H}[V',E']$ be the
directed multi-hypergraph with vertex set $V'$ and the hyperedges
$e' = (e^+ \cap V', e^- \cap V')$ for each $e\in E'$.  In another view,
$\mathcal{H}[V',E']$ is the hypergraph constructed by first taking the
K{\"o}nig representation $\koenig(\mathcal{H})$, selecting the subgraph
induced by $V'\cup E'$, and then reinterpreting it back into a hypergraph.
See Fig.\ \ref{fig:restrictedHypergraph} for an example.

\begin{figure}
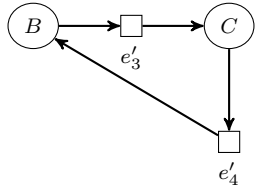

\centering
\autocataExBegin
\def\labBC{$e_3'$}
\def\labCAB{$e_4'$}
\autocataExRestricted
\autocataExEnd
\caption{Example of a restriction of the hypergraph shown in Fig.\
  \ref{fig:hyperedge}.  The shown hypergraph is $\mathcal{H}[V', E']$ with
  $V' = \{B, C\}$ and $E' = \{e_3, e_4\}$.  Note that the original $e_3$
  and $e_4$ have been modified as not all of their tail and head vertices
  are in $V'$.}
\label{fig:restrictedHypergraph}
\end{figure}

The interaction of the CRN $\mathcal{H} = (V, E)$ with its environment is
modeled by ``exchange reactions'' describing the possibility to import or
export/accumulate chemical species.  In general we add these reactions to
every species, and later introduce flow constraints to model specific food
and product sets when relevant.  The exchange reactions are the
\emph{input} edges $E^- = \{e^-_v = (\emptyset, \mset{v}) \mid v\in S\}$
and the \emph{output} edges
$E^+ = \{e^+_v = (\mset{v}, \emptyset) \mid v\in T\}$.  We therefore define
the \emph{extended hypergraph} $\overline{\mathcal{H}}=(V, \overline{E})$
of $\mathcal{H}$ with $\overline{E} = E\cup E^- \cup E^+$. The exchange
reactions appear as sources and sinks in the K{\"o}nig representation
$\koenig(\overline{\mathcal{H}})$. See Fig.\ \ref{fig:extended} for an
example of an extended hypergraph.

\begin{figure}
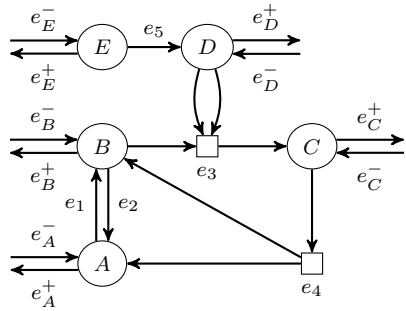

\centering
\autocataExBegin
\autocataEx
\autocataExAllIO
\autocataExEnd
\caption{The extended hypergraph $\overline{\mathcal{H}}$ of the one shown
  in Fig.\ \ref{fig:hyperedge}.  To reduce visual clutter the boxes of the
  IO edges are omitted and only an arc are shown for each of them.}
\label{fig:extended}

\end{figure}

\subsection*{Composite Reactions and Formal Autocatalysis}
On the set of reactions, i.e., hyperedges of a CRN $\mathcal{H}$ we
construct \emph{composite reactions} as integer linear combinations of the
form
\begin{equation}
  \sum_{e\in E} \left( f_e \sum_{x\in e^+} s^+_{xe} x \right)
  \ce{->}
  \sum_{e\in E} \left( f_e \sum_{x\in e^-} s^-_{xe} x \right)
\label{eq:compreact}
\end{equation}
with $f_e \in \mathbb{N}_0$. A composite reaction often contains one or
more species $y$ that appear with the same multiplicity on the both sides,
i.e., $\sum_{e\in E} f_e (s^+_{ye} - s^-_{ye})=0$. These are \emph{formal
  catalysts} for the composite reaction. It is customary to cancel formal
catalysts and to retain in the ``net reaction'' or ``overall reaction''
only the species for which $\sum_{e} (s^+_{ye} - s^-_{ye})f_e \ne 0$.

\begin{definition}
  A composite reaction is \emph{formally autocatalytic for $x$} if it is
  of the form
  \begin{equation*}
  \ce{(A) +} m\ce{x ->} n\ce{x + (W)}
\end{equation*}
for some integers $n>m>0$.
\label{def:FA}
\end{definition}
A CRN is formally autocatalytic if it admits a composite reaction that is
formally autocatalytic for one of its constituent compounds $x$.
Def.\ \ref{def:FA} captures King's notion of autocatalytic sets
\cite{King:78}. It also matches with G{\'a}nti's notion that autocatalysis
is associated with a cycle that eventually feeds a product back as an educt
such that ``after a finite number of turns, each constituent multiplies in
quantity'' \cite{Ganti:76}.

We \emph{conjecture} that it is impossible for a CRN $(V,E)$ to show
kinetic autocatalysis (i.e., non-linear acceleration) if, at least, it is
not formally autocatalytic, since it seems plausible that the presence of a
species $x$ for which the network is formally autocatalytic is a necessary
condition for positive feedback of $x$ on its formation. The notion of
formal autocatalysis implicitly appears in \cite{Virgo:16}, where such
composite reactions are shown to explain superlinear kinetics, i.e.,
autocatalytic behavior, of certain intermediates in a model of a complex
ligation network.

It is easy to see that formal autocatalysis cannot be sufficient. To this
end, consider two (possibly composite) reactions 
\begin{equation}
  \ce{(A) + X -> X + (W)} \qquad 
  \ce{(B) -> X + (U)}
  \label{eq:counter1}
\end{equation}
The first one is a net transformation \ce{(A) -> (W)} catalyzed by \ce{X},
i.e., it does not contribute to the production or degradation of \ce{X},
while the second one is simply a production reaction for \ce{X}. Their sum
is formally autocatalytic for \ce{X} with $m=1$ and $n=2$. Assuming that
the reactions \ce{(A) + X -> X + (W)} and \ce{(B) -> X + (U)} share only
\ce{X}, there clearly is no feedback between them.

In fact, formal autocatalysis is a very weak condition that includes
  reaction mechanisms such as the following 2-step decay of \ce{A}:
\begin{equation}
  \ce{(A) -> 2X}  \qquad
  \ce{X -> (V) + (W)}
  \label{eq:counter2}
\end{equation}
This CRN contains the composite reaction \ce{X + (A) -> 2 X + (V) + (W)},
making \ce{X} formally autocatalytic, even though \ce{X} in no is way
involved in its own production or maintenance.  This emphasizes that the
definition of formal autocatalysis lacks a condition that ties the two
``pathways'' more closely together. These two simple examples naturally
lead to a stricter notion of autocatalysis, ``exclusive autocatalysis'',
where we require that \ce{X} cannot be produced unless \ce{X} is already
present, e.g.\ see \cite{Kun:08}.  However, in order to formalize this idea
properly, we first need to consider integer hyperflows as a way to formalize
the intuitive notion of a pathway.

Before we proceed, we note that network structure alone is certainly
insufficient to imply autocatalysis in the kinetic sense. Even in the
setting of simple ``autocatalytic cycles'', the dynamical behavior depends
crucially on the kinetic parameters \cite{Barenholz:17}.

\subsection*{Integer Hyperflows} 
Pathways, understood as systems of reactions with defined input, are
naturally described mathematically as integer hyperflows
\cite{Andersen:19a}. In this and the following section we introduce some
necessary notation and then explain the connection between integer
hyperflows and the ``algebra'' of reactions in a CRN.

For an extended hypergraph $\mathcal{H} = (V, \overline{E})$, we write
$\delta^{+}_A(v)$ as the set of out-edges from a vertex $v\in V$,
restricted to the edge set $A\subseteq \overline{E}$, i.e.,
$\delta^{+}_A(v)=\{e\in A\mid v\in e^{+}\}$.  Likewise, $\delta^{-}_A(v)$
denotes the restricted set of in-edges incident $v$.
\begin{definition}
  A \emph{hyperflow} on $\overline{\mathcal{H}}$ is a function
  $f\colon \overline{E}\rightarrow \mathbb{R}_0^+$ satisfying, for each
  $v\in V$ the conservation constraint
  \begin{equation}
    \label{eq:masscons}
    \sum_{e\in \delta^+_{\overline{E}}(v)} m_v(e^+)f(e) - 
    \sum_{e\in \delta^-_{\overline{E}}(v)} m_v(e^-)f(e) = 0
  \end{equation}%
\end{definition}
The sum of flow out of each vertex must be the same as the sum of flow into
it.  The concept goes back to \cite{Hoffman:74}. It also naturally appears
in Metabolic Flux Analysis and Flux Balance Analysis: writing
$\mathbf{S}_{ve} := m_v(e^+)-m_v(e^-) = s_{ve}^+-s_{ve}^-$ indeed allows us
to express Eq.\ \eqref{eq:masscons} in matrix notation as
$\mathbf{S} f = 0$.

We write $f_1\le f_2$ if $f_1(e)\le f_2(e)$ holds for all $e\in E$. We
write $f_1< f_2$ if $f_1\le f_2$ and $f_1\ne f_2$.  In contrast we use
$f_1\ll f_2$ if $f_1(e)< f_2(e)$ for every hyperedge $e\in \overline{E}$.
A key property of flows is that linear combinations of flows are again
flows as long as non-negativity is preserved. In particular the difference
of two flows $f_1$ and $f_2$ is still a flow if and only if $f_1-f_2\ge 0$.

In this contribution we shall be interested mostly in \emph{integer
  hyperflows}, which for simplicity we will refer to simply as flows
unless otherwise specified.

For a flow $f$ on $\overline{\mathcal{H}}$ we denote by $S(f)$ and $T(f)$
the \emph{actual} source and target species in a given flow $f$, i.e.,
\begin{equation}
  S(f) = \{v \mid f(e^-_v)>0\} \text{ and }
  T(f) = \{v \mid f(e^+_v)>0\}
\end{equation}
When specifying a model for analysis we may also want to specify \textit{a
  priori} an \emph{allowed} source set $S\subseteq V$ and target set
$T\subseteq V$ in $\overline{\mathcal{H}}$. We refer to the triple
$(\mathcal{H},S,T)$ as the \emph{I/O-constrained extended hypergraph}. In
this situation we are only interested in flows $f$ satisfying
$f(e^-_v) = 0$ for all $v\notin S$ and $f(e^+_v) = 0$ for all $v\not\in T$,
i.e., $S(f)\subseteq S$ and $T(f)\subseteq T$.  In the context of metabolic
networks the sources $S$ are usually given by the food set, and the targets
$T$ are the products that can be removed or accumulated.

\subsection*{Flows for Composite and Net Reactions} 

\begin{figure}
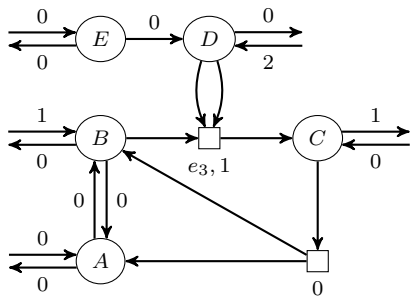

\centering
\autocataExBegin
	\def\labAB{$0$}
	\def\labBA{$0$}
	\def\labBC{$e_3, 1$}
	\def\labCAB{$0$}
	\def\labED{$0$}
	\def\labAI{$0$}\def\labAO{$0$}
	\def\labBI{$1$}\def\labBO{$0$}
	\def\labCI{$0$}\def\labCO{$1$}
	\def\labDI{$2$}\def\labDO{$0$}
	\def\labEI{$0$}\def\labEO{$0$}
\autocataEx
\autocataExAllIO
\autocataExEnd
\caption{The extended hypergraph $\overline{\mathcal{H}}$ from
  Fig.\ \ref{fig:extended}, with $f^{e_3}$ annotated.}
\label{fig:singleEdgeFlow}
\end{figure}

Our next task is to formally connect systems of reactions with flows.
Recall that composite reactions are obtained by ``adding up'' reactions,
i.e., hyperedges. The same can be done for flows. To this end, we associate
each hyperedge $e$ of $\mathcal{H}$, i.e., each reaction in the CRN, with a
flow $f^e$ defined by $f^e(e)=1$, $f^e(e')=0$ for $e'\in E\setminus\{e\}$,
input-flows $f^e(e^-_v)=s_{xe}^-$ for $x\in e^-$ and output-flows
$f^e(e^+_v)=s_{xe}^+$ for $x\in e^+$; all other input- and output-flows are
set to zero. We call $f^e$ the \emph{reaction flow} of $e$. That is, a flow
of $1$ through reaction $e$ requires an input-flow of its educts and an
output-flow of its products in proportions given by the stoichiometric
coefficient.  The reaction flow $f^e$ thus is simply a representation of a
single reaction $e$ in the language of flows. In
Fig.\ \ref{fig:singleEdgeFlow} a reaction flow is shown.

This mathematical construct is useful because it makes it possible to write
the flow $f$ that is associated with a composite reaction (pathway) as a
weighted sum of reaction flows. The multiplicity of a reaction $e$ in
Eq.\ \eqref{eq:compreact} is simply the flow $f(e)$ through $e$ and hence we
have the formal decomposition
\begin{equation}
  f = \sum_{e\in E} f(e) f^{e}
\end{equation}
Recall that in constructing a composite reactions we are only allowed to
add reactions. Thus every compound $v$ comes with an input-flow $f(e^+_v)$
and an output-flow $f(e^-_v)$ that again matches the stoichiometric
coefficients in the composite reactions.

The point of using net reactions, in contrast to using composite reactions,
is that we are allowed to \emph{cancel} intermediates, that is, to remove
an equal number of copies from both the product and the educt side. This
operation can also be formalized in terms of flows. To this end we
introduce the \emph{futile flow} $f^v$ for compound $v$ defined as
$f^v(e^-_v)=f^v(e^+_v)=1$ and $f(e)=0$ for all other reactions
$e\in \overline{E}$. Given a flow $f$, it is easy to see that
$\tilde f= f-c f^v$ is again a valid flow as long as
$c\le \min\{f(e^+_v),f(e^-_v)\}$. That is, we can reduce in $f$ the
input-flow of $v$ and output-flow of $v$ by the same amount as long as we
do not attempt to construct a negative input-flow $\tilde f(e^+_v)$ and or
a negative output-flow $\tilde f(e^-_v)$. In terms of net reaction that
means we may reduce the stoichiometric coefficients of a compound that
appears on both sides by the same amount.

The issue here is that arbitrary canceling of intermediate compounds from
a composite reaction does not necessarily leave us with a net reaction that
will actually take place because we may have canceled essential catalytic
or autocatalytic species. In the flow formalism, however, we can ask which
cancellations are allowed and which are not: We only have to ask whether,
for a given set $S$ of input species and a given set $T$ of output species
there is a flow $f$ with $S(f)\subseteq S$ and $T(f)\subseteq T$, where $S$
and $T$ are subsets of the species on the educt and product side of the
composite reaction. If the answer is yes, we can cancel all intermediate
species $x\in V\setminus (S(f)\cup T(f))$. Correspondingly,
cancellations of $x\in S\cup T$ are not allowed in an I/O constrained
networks $(\mathcal{H}, S, T)$.

Let us write $\supp(f) := \{e\in E\mid f(e)>0\}$ for the set of reactions
(not including I/O hyperedges) that are ``active''. Every flow $f$ can be
associated with a composite reaction, namely the one that consists of all
reactions $e\in\supp(f)$. The stoichiometric coefficients for each $x\in V$
are given by
\begin{equation}
  q^-_x=\hspace*{-1em}\sum_{e\in\supp(f)}\hspace*{-1em} f(e)m_x(e^-)
  \text{ and }
  q^+_x=\hspace*{-1em}\sum_{e\in\supp(f)}\hspace*{-1em} f(e)m_x(e^+)
\end{equation}
Since there is neither an input-flow nor an output-flow for
$x\in V\setminus (S(f)\cup T(f))$, we can conclude immediately that
stoichiometric coefficients of $x$ as an educt, $q^-_x$, and as a product,
$q^+_x$, must be the same. We summarize this discussion as
\begin{lemma}
  There is a flow $f$ on the I/O-constrained network $(\mathcal{H}, S, T)$
  if and only if there is a composite reaction
  $\sum q^-_x x \ce{->} \sum q^+_x x$.  Moreover, in this case
  its stoichiometric coefficients satisfy $q^-_x=q^+_x$ for all
  $x\in V\setminus (S(f)\cup T(f))$.
  \label{lem:flow<->compreact}
\end{lemma}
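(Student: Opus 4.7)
The plan is to prove the lemma as a direct translation between the flow formalism and the composite-reaction formalism, with the conservation constraint \eqref{eq:masscons} doing the single substantive piece of work: it is precisely what forces the balance $q^-_x = q^+_x$ at intermediates.

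For the forward direction, given a flow $f$ on $(\mathcal{H}, S, T)$, I would set $f_e := f(e)$ for each $e \in E$ and read off the composite reaction from \eqref{eq:compreact}. By construction its stoichiometric coefficients on the two sides are exactly the $q^+_x$ and $q^-_x$ from the definition just above the lemma. Now fix any $v \in V\setminus(S(f)\cup T(f))$; by definition of $S(f)$ and $T(f)$ one has $f(e^-_v) = f(e^+_v) = 0$, so the only edges of $\overline{\mathcal{H}}$ incident to $v$ with positive flow are internal. Equation \eqref{eq:masscons} at $v$ therefore collapses to
\[
\sum_{e\in E,\,v\in e^+} m_v(e^+)\,f(e) \;=\; \sum_{e\in E,\,v\in e^-} m_v(e^-)\,f(e),
\]
which is exactly the claimed identity between $q^+_v$ and $q^-_v$.

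For the reverse direction, given a composite reaction with coefficients $f_e\in\mathbb{N}_0$, I would set $f(e) := f_e$ on every internal edge and then absorb the leftover imbalance at each vertex $v$ into its exchange edges, concretely via
\[
f(e^+_v) \;:=\; \max\bigl(0,\,q^-_v - q^+_v\bigr),\qquad f(e^-_v) \;:=\; \max\bigl(0,\,q^+_v - q^-_v\bigr).
\]
A direct substitution into \eqref{eq:masscons} shows that $f$ so defined is a non-negative integer hyperflow on $\overline{\mathcal{H}}$. For any species $v\notin S(f)\cup T(f)$, the hypothesized equality $q^-_v = q^+_v$ makes both exchange flows at $v$ vanish, so the I/O constraints $S(f)\subseteq S$ and $T(f)\subseteq T$ reduce to the assertion that every species with strict net production or consumption in the composite reaction lies in $T$ or $S$ respectively, which is precisely what a composite reaction on the I/O-constrained network $(\mathcal{H},S,T)$ is taken to mean.

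No step here is genuinely hard; the only thing to be careful about is the bookkeeping of exchange edges. Because $e^-_v$ and $e^+_v$ are the unique edges of $\overline{\mathcal{H}}\setminus E$ incident to $v$, the conservation sum at $v$ splits cleanly into an ``internal'' contribution (the composite reaction's $q^\pm_v$) and an ``imbalance'' contribution (the I/O flows). This clean decomposition is what drives both directions of the equivalence and immediately yields the intermediate-balance conclusion.
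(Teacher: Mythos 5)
Your proposal is correct and follows essentially the same route as the paper, which presents this lemma as a summary of the preceding discussion: the forward direction is the paper's observation that vanishing exchange flow at $v\notin S(f)\cup T(f)$ collapses the conservation constraint \eqref{eq:masscons} to $q^-_v=q^+_v$, and your reverse direction with $f(e^\pm_v)$ set to the positive part of the imbalance is just the paper's decomposition $f=\sum_e f(e)f^e$ followed by cancellation of futile flows, carried out in one step. The only caveat, which you already flag, is that the lemma is informal about what it means for a composite reaction to respect the I/O constraints, and your reading (net consumers in $S$, net producers in $T$) is the intended one.
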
 
In summary, therefore, we can associate a flow with every composite
reaction and \emph{vice versa}. An advantage of the flow framework is that
it links to a convenient computational paradigm. ``Flow queries'',
i.e., the question whether there exists a flow with prescribed properties,
are naturally phrased as (integer) linear programs, and thus can be
answered by generic solvers, see e.g.\ \cite{Andersen:19a} for a more
detailed discussion.

\subsection*{Formally Autocatalytic Flows}

We next link the flow formalism to the notion of formal autocatalysis
introduced in Def.\ \ref{def:FA}. The following statement is a direct
consequence of Lemma \ref{lem:flow<->compreact}, noting that an
(auto)catalytic species necessarily must be contained in both $S(f)$ and
$T(f)$.
\begin{lemma}
  There is a formally autocatalytic compound reaction for $x$ if and
  only if there is a flow $f$ on $\overline{\mathcal{H}}$ such that
  \begin{equation}
    0 < f(e_x^-) < f(e_x^+) 		
    \label{eq:flow-fa}
  \end{equation}
  \label{lem:fauto}
\end{lemma}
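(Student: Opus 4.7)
The plan is to invoke Lemma~\ref{lem:flow<->compreact} as a bijective translation between composite reactions on $\mathcal{H}$ and flows on $\overline{\mathcal{H}}$, and then to specialize it by identifying the stoichiometric coefficients of $x$ with the exchange-edge flow values $f(e_x^-)$ and $f(e_x^+)$. The hint in the excerpt already commits us to this strategy: an (auto)catalytic species must lie in $S(f) \cap T(f)$, so both exchange flows of $x$ must be strictly positive, and the autocatalytic imbalance ``$n > m$'' must manifest as $f(e_x^+) > f(e_x^-)$.

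For the forward direction, suppose $(A) + m x \to n x + (W)$ with $n > m > 0$ is a formally autocatalytic composite reaction for $x$. Lemma~\ref{lem:flow<->compreact} produces an associated flow $f$ on $\overline{\mathcal{H}}$. Since $x$ occurs with multiplicity $m > 0$ as an educt and $n > 0$ as a product, it lies in $S(f) \cap T(f)$; the exchange flows realize the respective multiplicities as $f(e_x^-) = m$ and $f(e_x^+) = n$, so the inequality $n > m > 0$ translates directly into $0 < f(e_x^-) < f(e_x^+)$.

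For the backward direction, suppose $f$ satisfies $0 < f(e_x^-) < f(e_x^+)$. The strict positivity of both exchange flows places $x \in S(f) \cap T(f)$, so by Lemma~\ref{lem:flow<->compreact} the associated composite reaction has $x$ on both the educt and the product side (the stoichiometric coefficients of $x$ are not cancelled, since $x \in S(f) \cup T(f)$). Setting $m := f(e_x^-)$ and $n := f(e_x^+)$ gives a composite reaction of the form $(A) + m x \to n x + (W)$, and the hypothesis $f(e_x^-) < f(e_x^+)$ yields $n > m > 0$, which is formal autocatalysis for $x$ by Definition~\ref{def:FA}.

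The only place one could slip is the identification step: one must make sure that the multiplicities of $x$ on the educt and product sides of the composite reaction really coincide with $f(e_x^-)$ and $f(e_x^+)$. This pairing is guaranteed by the uncancelled-coefficient clause of Lemma~\ref{lem:flow<->compreact} for species in $S(f) \cap T(f)$, together with mass conservation at $x$, which forces $f(e_x^+) - f(e_x^-)$ to equal the net $n - m$ produced by the composite reaction. Once this bookkeeping is settled, both directions collapse into the tautology that autocatalysis of $x$ is precisely the condition that $x$ is simultaneously an input to and an output of the pathway, with strictly more of the latter than the former.
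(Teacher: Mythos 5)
Your strategy is exactly the paper's: the paper disposes of Lemma~\ref{lem:fauto} in a single sentence, calling it ``a direct consequence of Lemma~\ref{lem:flow<->compreact}, noting that an (auto)catalytic species necessarily must be contained in both $S(f)$ and $T(f)$,'' and your two directions are a faithful unpacking of that remark. The forward direction is fine: from a formally autocatalytic composite reaction one forms $f=\sum_e f_e f^e$ and is free to \emph{choose} the exchange flows so that $f(e_x^-)=m$ and $f(e_x^+)=n$.

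The backward direction, however, contains an over-claimed step (one that the paper's one-line justification glosses over as well). You assert that $x\in S(f)\cap T(f)$ forces the composite reaction supported by $f$ to carry $x$ on \emph{both} sides with coefficients $m=f(e_x^-)$ and $n=f(e_x^+)$. Lemma~\ref{lem:flow<->compreact} does not give you this: it only says that species \emph{outside} $S(f)\cup T(f)$ have balanced coefficients, and conservation at $x$ pins down only the difference $f(e_x^+)-f(e_x^-)$, equal to the net production of $x$ by the reactions in $\supp(f)$ --- not the individual educt- and product-side coefficients. Concretely, for the network consisting of the single reaction $A\to X$, the flow with $f(A\to X)=1$, $f(e_A^-)=1$, $f(e_X^-)=1$, $f(e_X^+)=2$ is conservative and satisfies $0<f(e_X^-)<f(e_X^+)$, yet every composite reaction of this network is $kA\to kX$, in which $X$ never appears as an educt, so the condition $m>0$ of Def.~\ref{def:FA} fails under a literal reading. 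To close the gap you must either additionally argue that some $e\in\supp(f)$ actually consumes $x$ (i.e., the educt-side coefficient of $x$ in the composite reaction is positive) --- which does not follow from $f(e_x^-)>0$ alone, since imported $x$ may simply flow straight out --- or adopt the convention that the coefficients $m,n$ in Def.~\ref{def:FA} are read off the I/O flows of the associated flow rather than off the composite reaction itself. The latter is evidently the paper's intent, but it is not what Def.~\ref{def:FA} literally says, and your proof should make this identification explicit rather than attribute it to Lemma~\ref{lem:flow<->compreact}.
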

In a practical setting we may additionally I/O-constrain $\mathcal{H}$ with
specific source and target sets $S$ and $T$.  The condition matches the
definition of ``overall autocatalysis'' in, e.g.,
\cite{Andersen:19a}. Naturally, we are interested in minimal formally
autocatalytic flows $f$, i.e., those that do not contain a ``smaller''
formally autocatalytic flow $f_1$.

The notion of ``smaller'' in this context deserves some consideration. It
could mean either $\supp(f_1)\subsetneq\supp(f)$ or $f_1<f$. For not
necessarily integer flows, is its well known that the existence of a flow
$f_1$ with $\supp(f_1)\subsetneq\supp(f)$ is equivalent to the existence of
a flow $f_2$ with $f_2<f$ that is not proportional to $f$.  Analogously,
there is an integer flow $f_1$ with $\supp(f_1)\subsetneq\supp(f)$ if and
only if there is an integer flow $f_2$ that is not proportional to $f$ and
an integer $a\ge1$ such that $f_2<a f$. This suggests to think of
``smaller'' flows as those that have the smaller support. Support
minimality features prominently with Extremal Flux Modes
\cite{SchusterS:94,Klamt:17} and has been discussed in detail in this
context.

\subsection*{Exclusive Autocatalysis}
As noted above, the Def.\ \ref{def:FA} and its counterpart in terms of
flows on $\overline{\mathcal{H}}$, Eq.\ \eqref{eq:flow-fa}, are not
satisfactory because parallel reactions such as Eq.\ \eqref{eq:counter1}
and even degradation pathways such Eq.\ \eqref{eq:counter2} are formally
autocatalytic. The most straightforward, but crude way of handling this
shortcoming in the definition is to require, in addition, that an
autocatalytic species $x$ cannot be produced from within the network
unless a minute amount is already present at the outset. In other words,
the network under consideration does not contain a pathway that produces
$x$ in a non-autocatalytic manner from the same food set. This concept
matches the intuition of autocatalysis, e.g., in \cite{Kun:08}, and was
used as a component in \cite{Andersen:19a}.  In the language of flows we
can formalize it as follows:
\begin{definition}
  A species $x$ is \emph{exclusively autocatalytic} in an
  I/O-constrained network $(\mathcal{H}, S, T)$ if there is a flow $f$ such
  that (i) $x$ is formally autocatalytic in $f$ and (ii) there is no
  flow $f_1$ in $(\mathcal{H}, S, V)$ with $f_1(e_x^-)=0$ and
  $f_1(e_x^+)>0$.
  \label{def:obauto}
\end{definition}
Exclusive autocatalysis is a quite strict requirement: if $x$ in any way
can be produced from the sources, without regard to the sinks, it is
disqualified from being exclusively autocatalytic.  Condition (ii) thus
boils down to a simple reachability question in $\mathcal{H}$.  In general,
for a given set of starting materials (``food set'') $F\subseteq V$ and a
set $E'\subseteq E$ of reactions the \emph{scope} \cite{Handorf:05} -- or
the \emph{closure} in the language of chemical organizations
\cite{Kaleta:06,Benkoe:09a} $c(F, E')$ is constructed recursively as
$c(F, E')=\bigcup_i Q_i$, where $Q_0=S$ and, for $i\ge 1$,
\begin{equation}
  Q_i = \bigcup\{ e^-\mid e\in E' \text{ and } e^+\subseteq Q_{i-1}\}
\end{equation}
is the set of a product compounds that can be produced by reactions (in
$E'$) whose educts are available in the previous step $Q_{i-1}$.  An
equivalent way to define $c(F, E')$ is to require $F\in c(F, E')$, and then
for all edges $e\subseteq E'$ if all tail vertices are included,
$e^+\subseteq c(F, E')$, then all head vertices are as well,
$e^-\in c(F, E')$.  Condition (ii) of Def.\ \ref{def:obauto} can thus be
expressed as $x \not\in c(S\setminus \{x\}, E)$.

Def.\ \ref{def:obauto} formalizes a very strict interpretation of the idea
that $x$ cannot be produced unless it is present to seed to its own
production. Condition (ii) is independent of the candidate flow $f$ and
pertains to the complete molecule set $V$ as target set.  As an object of
future study there are several meaningful, less restrictive variations of the
definition, for example:
\begin{enumerate}
\item $f_1$ is found in $(\mathcal{H}, S(f), V)$, allowing all edges $E$,
\item $f_1$ is found in $(\mathcal{H}, S, T)$, allowing all edges $E$,
\item $f_1$ is found in $(\mathcal{H}, S(f), T(f))$, allowing all edges $E$,
\item $f_1$ is found in $(\mathcal{H}, S(f), T(f))$, but allowing only
  edges from $\supp(f)$,
\end{enumerate}
While the first of these variants also can be phrased as a reachability
problem, the others are non-trivial hyperflow queries due to the constraint
on the output flow to a subset of vertices. The last variant can be
interpreted as a question on whether $x$ can be canceled from the educt
side of the composite reaction defined by the formally autocatalytic flow
$f$. This condition therefore is in a sense concerned with the
connectedness of the formally autocatalytic flow $f$. All these concepts of
exclusive or ``obligatory'' autocatalysis are very
restrictive as far as alternative routes are concerned, while the idea of an
underlying autocatalytic cycle is implicit at best.

\subsection*{Autocatalytic Cycles {\itshape{sensu}} Barenholz {\itshape{et
      al.}} (2017)}

Several authors have formalized autocatalysis in terms of the algebraic
properties of the stoichiometric matrix $\mathbf{S}$. In this and the
following section we review two definitions and fit them into the
mathematical framework outlined above, and thus translating them into
constraints on flows on the extended hypergraph
$\overline{\mathcal{H}} = (V, \overline{E})$.

In \cite{Barenholz:17} an autocatalytic cycle is defined as a pair $(M,R)$
of metabolites $M\subseteq V$ and $R\subseteq E$ such that the restriction
$\mathbf{S^*}$ to rows $M$ and columns $R$ satisfies the following 
conditions:
\begin{itemize}
\item[(o)] $R$ contains no reversible pair of reactions.
\item[(i)] For every $x\in M$ there is $e_1,e_2\in R$ with $s_{xe_1}>0$ and
  $s_{xe_2}<0$, and \\
  for every $e\in R$ there is $x_1,x_2\in M$ with $s_{x_1e}>0$ and
  $s_{x_2e}<0$.
\item[(ii)] There is a strictly positive integer vector
  $w\in \mathbb{N}^{|R|}$, $w\gg 0$ such that $\mathbf{S^*} w > 0$,
\item[(iii)] There is no vector $w'>0$ with at least one $e\in R$ for which
  $w'_e=0$ such that $\mathbf{S^*}w>0$.
\end{itemize}
Using the fact that we can express composite reactions as reaction flows we
can rewrite condition (ii) in the flow form as
\begin{itemize}
\item[(ii')] There is a flow $f$ on $\overline{\mathcal{H}}$ such that
  $f(e_v^+)\ge f(e_v^-)>0$ for all $v\in M$ and $f(e)=0$ for all
  $e\in E\setminus R$ and $f(e_v^+)>f(e_v^-)$ for at least one $v\in M$.
\end{itemize}
The first part of condition (i) is equivalent to $x\in M$ appearing on
both sides of the composite reaction, and thus $f(e_v^+),f(e_v^-)>0$ in
the corresponding flow. Thus $M$ consists only of species that are
catalytic ($f(e_v^+)=f(e_v^-)$) or autocatalytic for $v$.  The second
condition constrains $M$ to contain at least one educt and one product of
every $e\in\supp(f)$.

\begin{definition}
  Let $f$ be a flow on $\overline{\mathcal{H}}$. A subset $M\subseteq V$ is
  a \emph{Milo set} for $f$ if
  \begin{itemize}
  \item[\AX{(M1)}] $v\in M$ implies $f(e_v^+)\ge f(e_v^-)>0$,
  \item[\AX{(M2)}] there is $v\in M$ such that $f(e_v^+)>f(e_v^-)$,
  \item[\AX{(M3)}] for all $v\in M$ there is $e'\in E$ with $f(e')>0$ and $v\in
    e'^-$ and $e''\in E$ with $f(e'')>0$ and $v\in e''^+$, and
  \item[\AX{(M4)}] for every $e\in E$ with $f(e)>0$ holds
    $M\cap e^-\ne\emptyset$ and $M\cap e^+\ne\emptyset$.
  \end{itemize}
  A flow $f$ with a Milo set is a \emph{Milo flow}.
  \label{def:Milof}
\end{definition}
Note that \AX{(M2)} implies that a Milo set is non-empty. Furthermore, if
$f$ is a Milo flow, then the Milo set satisfies $M\subseteq S(f)\cap T(f)$.
\begin{lemma}
  If $f$ is a Milo flow on $\overline{\mathcal{H}}$ then $f$ is formally
  autocatalytic for at least one $x\in M$. 
\end{lemma}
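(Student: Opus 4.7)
The plan is to observe that the conclusion follows almost immediately from Lemma \ref{lem:fauto} together with conditions (M1) and (M2); the other two Milo conditions are not needed for this particular direction. So the proof will be short.

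First I would invoke (M2) to pick a specific witness $v \in M$ with $f(e_v^+) > f(e_v^-)$. Next, I would apply (M1) to this same $v$, which guarantees $f(e_v^+) \geq f(e_v^-) > 0$, and in particular $f(e_v^-) > 0$. Combining the strict inequality from (M2) with the positivity from (M1) yields
\begin{equation*}
0 < f(e_v^-) < f(e_v^+),
\end{equation*}
which is exactly the hypothesis of Lemma \ref{lem:fauto}. Applying that lemma gives a formally autocatalytic composite reaction for $v$, and since $v \in M$ by construction, this proves the statement.

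There is essentially no obstacle here: the work was already done when setting up Lemma \ref{lem:fauto}, and the Milo conditions (M1)--(M2) are tailored to ensure precisely the inequality $0 < f(e_v^-) < f(e_v^+)$ that feeds into it. Conditions (M3) and (M4) play no role in this lemma; they are structural constraints about $M$ being supported inside $\supp(f)$ and covering both sides of every active reaction, and will become relevant only for finer statements that relate the Milo set to the topology of the underlying network. I would therefore keep the proof to a single short paragraph, explicitly noting which witness from (M2) is being used and that (M1) supplies the strict positivity needed for Lemma \ref{lem:fauto}.
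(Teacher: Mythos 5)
Your proof is correct and matches the paper's own argument exactly: both extract a witness $v\in M$ from \AX{(M2)}, use \AX{(M1)} to supply $f(e_v^-)>0$, and conclude $0<f(e_v^-)<f(e_v^+)$ so that Lemma \ref{lem:fauto} applies. Your added remark that \AX{(M3)} and \AX{(M4)} are not needed here is accurate.
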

\begin{proof}
  By \AX{(M1)}, $f(e^-_v)>0$ for all $v\in M$. Thus \AX{(M2)} implies that
  there is $v\in M$ with $f(e^+_v)>f(e^-_v)>0$, i.e., $v$ is formally
  autocatalytic according to Lemma \ref{lem:fauto}.
\end{proof}

\begin{figure}
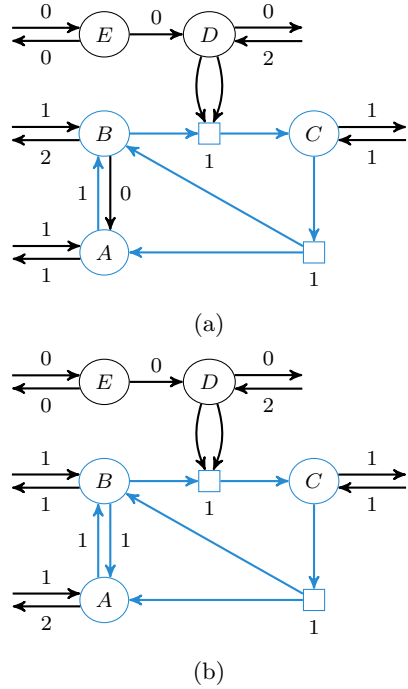

\centering
\subcaptionbox{\label{fig:sub:miloFlow}}{%
\autocataExBegin
	\def\labAB{$1$}
	\def\labBA{$0$}
	\def\labBC{$1$}
	\def\labCAB{$1$}
	\def\labED{$0$}
	\def\labAI{$1$}\def\labAO{$1$}
	\def\labBI{$1$}\def\labBO{$2$}
	\def\labCI{$1$}\def\labCO{$1$}
	\def\labDI{$2$}\def\labDO{$0$}
	\def\labEI{$0$}\def\labEO{$0$}
	\tikzset{
		styleA/.style={draw=sol-blue},
		styleB/.style={styleA},
		styleC/.style={styleA},
		styleAB/.style={styleA},
		styleBC/.style={styleA},		styleBBC/.style={styleA},	styleBCB/.style={styleA},
		styleCAB/.style={styleA},	styleCCAB/.style={styleA},	styleCABA/.style={styleA},	styleCABB/.style={styleA},
	}
\autocataEx
\autocataExAllIO
\autocataExEnd
}

\subcaptionbox{\label{fig:sub:miloFlowReverse}}{%
\autocataExBegin
	\def\labAB{$1$}
	\def\labBA{$1$}
	\def\labBC{$1$}
	\def\labCAB{$1$}
	\def\labED{$0$}
	\def\labAI{$1$}\def\labAO{$2$}
	\def\labBI{$1$}\def\labBO{$1$}
	\def\labCI{$1$}\def\labCO{$1$}
	\def\labDI{$2$}\def\labDO{$0$}
	\def\labEI{$0$}\def\labEO{$0$}
	\tikzset{
		styleA/.style={draw=sol-blue},
		styleB/.style={styleA},
		styleC/.style={styleA},
		styleAB/.style={styleA},		styleBA/.style={styleA},
		styleBC/.style={styleA},		styleBBC/.style={styleA},	styleBCB/.style={styleA},
		styleCAB/.style={styleA},	styleCCAB/.style={styleA},	styleCABA/.style={styleA},	styleCABB/.style={styleA},
	}
\autocataEx
\autocataExAllIO
\autocataExEnd
}

\caption{Two examples of Milo flows on the same
  CRN. \subref{fig:sub:miloFlow} The Milo set $M$ and the supporting
  reactions $\supp(f)$, i.e., restricted network $\mathcal{H}[M, \supp(f)]$
  are highlighted in blue.  The vertex $B$ is formally
  autocatalytic. \subref{fig:sub:miloFlowReverse} Another Milo flow,
  containing a pair of reversible reactions. The equivalent flow $f'$
  obtained by removing $f(AB)=f(\overline{AB})$ is no longer a Milo flow
  because $A$ has no outgoing reaction left in $\supp(f')$. Thus, $f$ is a
  minimal Milo flow. However, $(f,M)$ is not an autocatalytic cycle in the
  sense of Barenholz \cite{Barenholz:17} since the flow $f'$ is a forbidden
  flow according to Def.\ \ref{def:Barenholz}. Since $f'$ itself is not a Milo
  flow, $(f', M)$ is also not an autocatalytic cycle.}
\end{figure}

For a Milo flow $f$, consider the the K{\"o}nig graph
$G:=\koenig(\mathcal{H}[M,\supp(f)])$ of its restriction to the Milo set of
$f$. By \AX{(M3)}, $G$ has no source or sink vertices, i.e., every vertex
of a Milo set is contained in a cycle of $G$.

So far, we have not used conditions (o) and (iii). 
\begin{definition}
  A Milo flow $f$ with Milo set $M$ forms an \emph{autocatalytic cycle}
  $(f,M)$ (\emph{sensu} Barenholz \emph{et al.}, 2007) if there is no flow
  $f_1$ with $\supp(f_1)\subsetneq \supp(f)$ that satisfies \AX{(M1)} and
  \AX{(M2)}.
  \label{def:Barenholz}
\end{definition}
The flow $f$ of an autocatalytic cycle in the sense of Def.\
\ref{def:Barenholz} does not contain a pair of reactions that form a
reversible pair $e$, $\bar e$. If $f$ contains such a reaction, consider
the flow
\begin{equation}
  \begin{split}
    f_1 =& \sum_{e'\in\supp{f}\setminus\{e,\bar e\}} f(e)f^e\\
    & + \min(f(e)-f(\bar e),0) f^e \\
    & + \min(f(\bar e)-f(e),0) f^{\bar e}
  \end{split}
\end{equation}
By construction, $f_1$ coincides with $f$ on $\supp(f)\setminus\{e,\bar e\}$,
has positive input-flow and output-flow, and satisfies that
$f_1(e_v^+)-f_1(e_v^-) = f(e_v^+)-f(e_v^-)$ for every $v\in M$.  Since
$f_1(e)=0$ or $f_1(\bar e)=0$, it is a forbidden flow according to Def.\ \ref{def:Barenholz}.  Thus (o) is in fact a consequence of (iii).

The forbidden flow $f_1$ in Def.\ \ref{def:Barenholz} is a very strong
condition. In particular \cite{Barenholz:17} states (without proof) that
the K{\"o}nig graph of every autocatalytic cycle is strongly connected. At
this point there is no formal proof for this statement, however.

The class of forbidden flows $f_1$ in Def.\ \ref{def:Barenholz} is larger
than Milo flows since $f_1$ is not restricted to flows with inputs and
output from within the set $M$.  Defining a minimal Milo flow to be one for
which there is no Milo flow $f'$ with $\supp(f')\subseteq \supp(f)$, we
observe that every autocatalytic cycle is a minimal Milo flow.  The
converse, however, is not necessarily true, as shown by the example in
Fig.\ \ref{fig:sub:miloFlowReverse}.  It remains an open question whether
all minimal Milo flows are also strongly
connected. Fig.\ \ref{fig:sub:miloFlowReverse} also shows that there are
strongly connected Milo flows that are not autocatalytic cycles
in the sense of Barenholz et al.

\subsection*{Autocatalytic Cores {\itshape{sensu}} Blokhuis {\itshape{et
      al.}} (2020)}

The key concept in \cite{Nghe:20} are submatrices $\mathbf{S^*}$ of the
stoichiometric matrix that are autonomous and productive in the following
sense:
\begin{itemize}
\item[(i)] $\mathbf{S^*}$ is productive if there is a $u\gg 0$ such that
  $\mathbf{S^*}u\gg 0$
\item[(ii)] For every column $e$ of $\mathbf{S^*}$ there are rows
  $v'$ and $v''$ such that $\mathbf{S^*}_{ev'}<0$ and $\mathbf{S^*}_{ev''}>0$.
\end{itemize}
A \emph{autocatalytic core} is defined as a minimal submatrix $\mathbf{S^*}$
of $\mathbf{S}$ with these properties. Prop.~1 in \cite{Nghe:20} shows that
in an autocatalytic core, every species $x$ appears both as a substrate and
as a product. This concept can be rephrased in terms of flows in a manner
that emphasizes its relationship with \cite{Barenholz:17}.
\begin{definition}
  Let $f$ be a flow on $\overline{\mathcal{H}}$. A subset
  $N\subseteq V$ is a \emph{Nghe set} for $f$ if it satisfies 
  \begin{itemize}
  \item[\AX{(N1)}] $v\in M$ implies $f(e_v^+)>f(e_v^-)>0$
  \end{itemize}
  and conditions \AX{(M3)} and \AX{(M4)} of Def.\ \ref{def:Milof}. A flow
  $f$ with a Nghe set $N\ne\emptyset$ is a \emph{Nghe flow}.
  \label{def:Nghef}
\end{definition}
From Def.\ \ref{def:Nghef} we immediately see that every Nghe flow is also a
Milo flow with $M=N$ since \AX{(N1)} obviously implies \AX{(M1)} and
\AX{(M2)}. Thus catalytic cores are Milo flows. In Fig.\ \ref{fig:ngheFlow}
an example of an Nghe flow is shown.

\begin{figure}
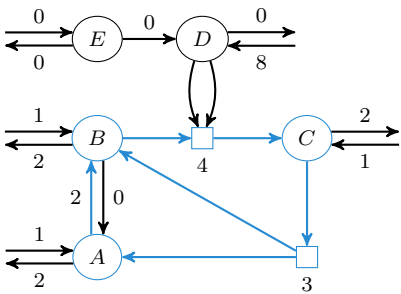

\centering
\autocataExBegin
	\def\labAB{$2$}
	\def\labBA{$0$}
	\def\labBC{$4$}
	\def\labCAB{$3$}
	\def\labED{$0$}
	\def\labAI{$1$}\def\labAO{$2$}
	\def\labBI{$1$}\def\labBO{$2$}
	\def\labCI{$1$}\def\labCO{$2$}
	\def\labDI{$8$}\def\labDO{$0$}
	\def\labEI{$0$}\def\labEO{$0$}
	\tikzset{
		styleA/.style={draw=sol-blue},
		styleB/.style={styleA},
		styleC/.style={styleA},
		styleAB/.style={styleA},
		styleBC/.style={styleA},		styleBBC/.style={styleA},	styleBCB/.style={styleA},
		styleCAB/.style={styleA},	styleCCAB/.style={styleA},	styleCABA/.style={styleA},	styleCABB/.style={styleA},
	}
\autocataEx
\autocataExAllIO
\autocataExEnd
\caption{Example of an Nghe flow,
with the defining Nghe set and supporting reactions highlighted in blue,
i.e., the restricted network $\mathcal{H}[M, \supp(f)]$.
All vertices in the Nghe set, $A$, $B$, and $C$ are formally autocatalytic.
}
\label{fig:ngheFlow}
\end{figure}

Condition \AX{(N1)} appears very restrictive. It will be of immediate
interest, therefore, to better understand under which conditions a Milo
flow contains a Nghe flow in the sense that for a Milo flow $(f,M)$ there
is a Nghe flow $(f_1,N)$ with $N\subseteq M$ and
$\supp(f_1,N)\subseteq\supp(f,M)$. The relationships between Milo and Nghe flows
deserve attention in future work. Similarly, the connections between
autocatalytic cycles \emph{sensu} Barenholz and autocatalytic cores will be
of interest.

Proposition~2 of \cite{Nghe:20} shows that autocatalytic cores $f$ are very
restricted structures: it is ``square'', i.e., $|N|=|\supp(f)|$, every
$x\in N$ is ``the solitary substrate of a reaction, and is substrate for
this reaction only''. Proposition~4 of \cite{Nghe:20}, furthermore, states
that every autocatalytic core is strongly connected. Thus, strongly
connected Nghe flows seem to be interesting objects to study in their own
right.

The work of Nghe \cite{Nghe:20} shows that minimal autocatalytic cores have
an essentially geometric characterization that can be expressed largely in
terms of the K{\"o}nig graph $K:=\koenig(\mathcal{H}[N,\supp(f)])$ of a
minimal Nghe flow.  In essence they can be understood as ``cycles with
ears'' comprising a simple cycle in $K$ augmented by either ``short cut
reactions'' or a path leading from some starting vertex $x$ in the cycle
back to an end-vertex $y$ on the cycle without intersecting the cycle in
its interior. In \cite{Nghe:20}, additional algebraic and minimality
conditions are required for a complete characterization of minimal
autocatalytic cores. This geometric structure suggests to search for
hyperflows whose Milo or Nghe sets have cycles or ears as their
K{\"o}nig graphs.

\section*{Mechanistically Simple Flow Solutions} 
The notion of ``autocatalytic cycles'' and in particular the idea of
``going around a cycle'' to produce additional copies of autocatalytic
compounds \cite{Ganti:76} suggests a definite temporal order in which
molecules ``flow'' through the reactions. This matches the chemist's
concept of a \emph{mechanism} as a sequence of reactions.  Condition
\AX{(M3)} for Milo and Nghe flows addresses this concern to some extent by
requiring input-flow and output-flow for every vertex in the distinguished
set, thus ensuring that a cycle exists in the K{\"o}nig graph of the
support.  On the other hand, minimal Milo and Nghe flows are rather
restrictive in their input/output conditions requiring all vertices in the
distinguished set to be a source and a target. The associated concepts of
autocatalytic cycles or cores, furthermore, ban pairs of reversible
reactions to be used.

The basic flow formulation, and the equivalent formulation based on the
stoichiometric matrix, only ensures mass balance, and does not imply any
particular ordering of reactions as such. In the following we recap the
notion of \emph{expanded flows} from \cite{Andersen:19a}, which has
constraints that introduce localized temporal order in the flow model.  It
makes it feasible to keep a predefined source/target specification in terms
of a I/O-constrained CRN $(\mathcal{H},S,T)$, as well as allowing pairs of
reversible reactions.  The model thus serves as a foundation for finding
chemical pathways in general.  In a later section we sketch how the model
can be enriched with more computational expensive constraints that ensure
the cyclicity required for a comprehensive model of structural
autocatalysis.

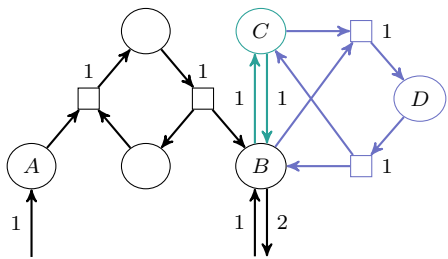
\begin{figure}
\centering
\begin{tikzpicture}
\matrix[matrixInnerSep, row sep=8, column sep=8] {
\&\& \node[hnode] (AB) {\phantom{$A$}};	\&\& \node[hnode, draw=sol-cyan] (C) {$C$};
	\&\&\& \node[hedge, draw=sol-violet, label={[overlay]right:1}] (BCe) {};  \\
\& \node[hedge, label={[overlay]1}] (Ae) {};   \&\&  \node[hedge, label={[overlay]1}] (Be) {};
	\&\&\&\&\& \node[hnode, draw=sol-violet] (D) {$D$}; \\
\node[hnode] (A) {$A$}; \&\& \node[hnode] (BA) {\phantom{$A$}}; \&\& \node[hnode] (B) {$B$};
	\&\&\&  \node[hedge, draw=sol-violet, label={[overlay]right:1}] (De) {};   \\
};
\foreach \s/\t in {A/Ae, Ae/AB, AB/Be, Be/B, Be/BA, BA/Ae} {
	\draw[edge] (\s) to (\t);
};
\draw[edge] ($(A.-90) + (-90:\ioEdgeDist)$) to node[auto, every label] {1} (A.-90);
\makeIO{B}{-90}{1}{2}

\draw[edge, draw=sol-cyan] (B.90+\isomerizationOffset) to node[auto, every label] {1} (C.-90-\isomerizationOffset);
\draw[edge, draw=sol-cyan] (C.-90+\isomerizationOffset) to node[auto, every label] {1} (B.90-\isomerizationOffset);
\foreach \s/\t in {B/BCe, C/BCe, BCe/D, D/De, De/B, De/C} {
	\draw[edge, draw=sol-violet] (\s) to (\t);
};
\end{tikzpicture}
\caption{Example for local reasoning of reaction ordering on a seemingly
  formally autocatalytic flow.  In all interpretations of the flow, the
  violet reactions into and out of \ce{D} forms a futile two-step
  sub-pathway. After removing the flow on these reactions we can then apply
  the same argument to \ce{C}, and then once more on the I/O flow of
  $B$. This leaves the net reaction \ce{A -> B}.}
\label{fig:misleadingFlow}
\end{figure}

As a motivating example, consider the expanded network with flow depicted
in Fig.\ \ref{fig:misleadingFlow}.  The flow, with the net reaction \ce{A +
  B -> 2 B}, is formally autocatalytic, but due to use of reversible
reactions it is equivalent to the simpler reaction \ce{A -> B}.  This can
be established through a step-wise local reasoning:
\begin{enumerate}
\item The only in-flow to \ce{D} is from \ce{B + C -> D} and the only
  out-flow is through the reverse reaction \ce{D -> B + C}.  Any
  ordering of the reactions in the flow will have this two-step futile
  part, and the flow on these reactions can thus be removed.
\item Without the violet part of the network, we can apply the same
  reasoning to vertex \ce{C} with the reactions \ce{B -> C} and \ce{C -> B}.
\item Without both the violet and cyan parts, we can consider \ce{B} with
  its input/output reactions, and decrease the flow by $1$.
\end{enumerate}

\begin{figure}
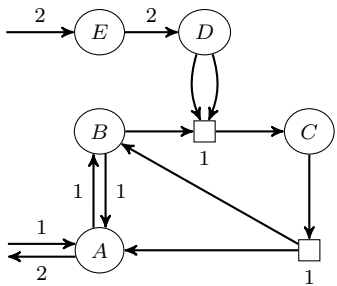

\centering
\autocataExBegin
	\def\labAB{1}\def\labBA{1}
	\def\labED{2}
	\def\labBC{1}
	\def\labCAB{1}
\autocataEx
\makeIO{A}{180}{1}{2}
\draw[edge] ($(E.180) + (180:\ioEdgeDist)$) to node[auto, every label] {2} (E.180);
\autocataExEnd
\caption{Example of a formally autocatalytic flow using pairs of mutually
  reverse reactions. Here, the reactions can be ordered such that no
  two-step futile sub-pathways are present.}
\label{fig:classicalAutocata}
\end{figure}

In this example there is no flow left on reversible reactions, but this is
not the case in general.  Consider the formally autocatalytic flow shown in
Fig.\ \ref{fig:classicalAutocata}, on our running example network.  Here
there are no vertices where we can apply the local temporal reasoning, and
in fact there even exists a partial order for the reactions that have no
pairs of reversible reactions in sequence: {\setlength\multicolsep{1ex}%
\begin{multicols}{2}\noindent
\begin{enumerate}
\item \ce{$\emptyset$ -> E}, twice
\item \ce{E -> D}, twice
\item \ce{$\emptyset$ -> A}
\item \ce{A -> B}
\item \ce{B + 2 D -> C}
\item \ce{C -> A + B}
\item \ce{B -> A}
\item \ce{A ->$\emptyset$}, twice
\end{enumerate}
\end{multicols}}
\par\noindent To consider global ordering one must invoke much
stronger, and computational expensive, formalisms, such as Petri nets that
explicitly ``tracks'' the paths of molecules through the CRN
\cite{Koch:10}.  Note also that in general a pathway may have cycles even
in a fully resolved temporal interpretation.

Fig.\ \ref{fig:classicalAutocata} also shows the requirement of
  mechanistic simplicity can enforce topological constraints. The flows
  obtained by changing the values for the reactions \ce{A -> B} and \ce{B
    -> A} to $0$ or $2$ are no longer chemically simply: In the first case
  it can only be realized by influx of $1$ at \ce{A} that immediately flows
  out again, and in the second case it required the a flow of $1$ reaching
  \ce{B} from \ce{A} is immediately redirected back to \ce{A}.

\subsection*{Expanded Networks and Flows}
To address the need for local routing constraints on flows we introduce the
\emph{expanded} hypergraph \cite{Andersen:19a}.
Given an extended hypergraph $\overline{\mathcal{H}} = (V, \overline{E})$
we expand each vertex into a complete bipartite
graph with vertices corresponding to each in-edge and out-edge.  That is,
for each $v\in V$:
\begin{align*}
  V_v^- &= \{u^-_{ve} \mid \forall e\in \delta^-_{\overline{E}}(v)\} \\
  V_v^+ &= \{u^+_{ve} \mid \forall e\in \delta^+_{\overline{E}}(v)\} \\
  E_v   &= \left\{\left(\mset{u^-}, \mset{u^+}\right)
          \mid u^-\in V_v^-, u^+\in V_v^+\right\}
\end{align*}
The hyperedges $E_v$ all have multiplicity 1 for their tail and head
vertex, and we call these edges the \emph{transit edges} of $v$.  We then
connect the original edges in the natural manner: for each
$e = (e^+, e^-) \in \overline{E}$ the reconnected edge is
$\widetilde{e}= (\widetilde{e}^+, \widetilde{e}^-)$ with
$\widetilde{e}^- = \mset{u^-_{v e}\mid v \in e^-}$ and
$\widetilde{e}^+ = \mset{u^+_{ve}\mid v \in e^+}$.  The multiplicities of
tails and heads correspond to the original multiplicities.  We finally
define the expanded hypergraph
$\widetilde{\mathcal{H}} = (\widetilde{V}, \widetilde{E})$ as
\begin{equation*}
  \widetilde{V} = \bigcup_{v\in V} V_v^- \cup \bigcup_{v\in V} V_v^+
  \quad\textrm{and}\quad
  \widetilde{E} = \bigcup_{v\in V}E_v \cup
  \{\widetilde{e}\mid e\in \overline{E}\}
\end{equation*}
This is again a directed multi-hypergraph where (integer) flows are defined
as usual.  An example of an expanded hypergraph is shown in
Fig.\ \ref{fig:expanded}.

\begin{figure}
\centering
\subcaptionbox{\label{fig:expanded}}{
	\autocataExBegin
		\def\labAB{}\def\labBA{}
		\def\labBC{}\def\labED{}\def\labCAB{}
	\autocataExExp
		\def\labAI{}\def\labAO{}
		\def\labBI{}\def\labBO{}
		\def\labCI{}\def\labCO{}
		\def\labDI{}\def\labDO{}
		\def\labEI{}\def\labEO{}
	\autocataExExpAllIO
	
	\foreach \s/\t/\b in {A-in-IO/A-out-IO, B-in-IO/B-out-IO, C-in-IO/C-out-IO, D-in-IO/D-out-IO, E-in-IO/E-out-IO,
			B-in-sc-A/B-out-sc-A, A-in-sc-B/A-out-sc-B} {
		\draw[tedge, draw=sol-red] (t-\s) to [bend right=-45, looseness=1.2] (t-\t);
	}
	\foreach \s/\t/\b in {A-in-IO/A-out-sc-B/30,    A-in-sc-B/A-out-IO/-40, A-in-CAB/A-out-IO/-5, A-in-CAB/A-out-sc-B/-40,
			B-in-IO/B-out-sc-A/-40, B-in-IO/B-out-BC/-5,    B-in-sc-A/B-out-IO/40, B-in-sc-A/B-out-BC/-40,  B-in-CAB/B-out-sc-A/30, B-in-CAB/B-out-IO/5, B-in-CAB/B-out-BC/-60,
			C-in-BC/C-out-CAB/-45, C-in-BC/C-out-IO/-5,  C-in-IO/C-out-CAB/40,
			D-in-IO/D-out-BC/30,	D-in-sc-E/D-out-BC/-45, D-in-sc-E/D-out-IO/0,
			E-in-IO/E-out-sc-D/0} {
		\draw[tedge] (t-\s) to [bend right=\b, looseness=1] (t-\t);
	}
	\autocataExEnd
}

\subcaptionbox{\label{fig:expandedST}}{
	\autocataExBegin
		\def\labAB{}\def\labBA{}
		\def\labBC{}\def\labED{}\def\labCAB{}
	\autocataExExpPruned
		\def\labAI{}\def\labAO{}
		\def\labBI{}\def\labBO{}
		\def\labCI{}\def\labCO{}
		\def\labDI{}\def\labDO{}
		\def\labEI{}\def\labEO{}
	\makeIOExp{A}{180}{\labAI}{\labAO}
	\makeIExp{E}{180}{\labEI}
	
	\foreach \s/\t/\b in {A-in-IO/A-out-IO, B-in-IO/B-out-IO, C-in-IO/C-out-IO, D-in-IO/D-out-IO, E-in-IO/E-out-IO,
			B-in-sc-A/B-out-sc-A, A-in-sc-B/A-out-sc-B} {
		\draw[tedge, draw=sol-red, draw=none] (t-\s) to [bend right=-45, looseness=1.2] (t-\t);
	}
	\foreach \s/\t/\b in {A-in-IO/A-out-sc-B/30,    A-in-sc-B/A-out-IO/-40, A-in-CAB/A-out-IO/-5, A-in-CAB/A-out-sc-B/-40,
			B-in-sc-A/B-out-BC/-40,  B-in-CAB/B-out-sc-A/30, B-in-CAB/B-out-BC/-60,
			C-in-BC/C-out-CAB/-45, 
			D-in-sc-E/D-out-BC/-45, 
			E-in-IO/E-out-sc-D/0}{
		\draw[tedge] (t-\s) to [bend right=\b, looseness=1] (t-\t);
	}
	\autocataExEnd
}
\caption{Example of expanded hypergraphs.  \subref{fig:expanded} The
  expanded hypergraph $\widetilde{\mathcal{H}}$ of the network from
  Fig.\ \ref{fig:hyperedge}.  The actual vertices are the small black
  circles while the large circles only indicate grouping corresponding to
  the original vertices of $\mathcal{H}$.  The red transit edges are those
  that go between pairs of edges that are mutually reverse of each other in
  the original network.  \subref{fig:expandedST} The effective network that
  can have non-zero flow, when setting the allowed sources to $\{A, E\}$
  and allowed targets $\{A\}$.}
\label{fig:expanded:both}
\end{figure}

For each pair of mutually reverse edges
$e = (e^+, e^-), \bar e = (\bar e^+, \bar e^-)\in \overline{E}$ and a
vertex $v\in e^-$ there is a \emph{futile transit edge}
$t = (u^-_{ve}, u^+_{v\bar e})$. These futile transit edges correspond to
pushing flow back immediately in the opposite direction of a reversible
reaction without first processing the products in a different reaction.  In
Fig.\ \ref{fig:expanded:both} these edges are shown in red. We can now add
constraints on flow reversibility by simply requiring that the flow on the
red edges vanishes, that is, we enforce the constraint $f(t)=0$ for all
futile transit edges in the expanded hypergraph.

As shown in \cite{Andersen:19a} this expanded network model is
computationally not much more difficult to find solutions in than the
original network.  We can thus simply use the expanded network as a
convenient background model for introducing routing constraints.  In
particular, for each flow $\widetilde{f}$ on the expanded network
$\widetilde{\mathcal{H}}$ we can trivially obtain the equivalent flow $f$
on the extended network $\overline{\mathcal{H}}$ simply by contracting the
expanded vertices again.  This leads us to a type of autocatalysis called
\emph{overall autocatalysis} \cite{Andersen:19a}.
\begin{definition}
  A species $x\in V$ is \emph{overall autocatalytic} for a network
  $\mathcal{H} = (V,E)$ if there exist a flow $\widetilde{f}$ on the
  expanded network $\widetilde{\mathcal{H}}$ such that $f(t)=0$ on all
  futile transit edges and the corresponding contracted flow $f$ on the
  extended network $\overline{H}$ satisfies $0 < f(e_x^-) < f(e_x^+)$.
\end{definition}
This model of overall autocatalysis has been implemented using Integer
Linear Programming as an extension of the software package \modAbbr{}
\cite{mod}.  It does not constrain solutions to actually contain a cycle,
but as outlined in the next section, it is already useful in analyzing
chemical systems when coupled with the notion of exclusive autocatalysis
described earlier.

\section*{Autocatalysis in Metabolic Networks} 
\sisetup{group-minimum-digits=4,}

Metabolism as a whole seems to minimize the generation of waste
molecules. Instead, byproducts and waste from one pathway are fed back into
the network as a valuable resource for another. The effect of this
``molecular recycling'', or ``metabolic closure'', is the emergence of
(catalytic) cycles in the reaction network, a necessary precondition for
autocatalysis. Autocatalytic cycles can persist under noisy conditions,
since they can replace mass loss along the cycle. This feature could be
responsible for the inherent robustness of metabolism against fluctuations
\cite{Piedrafita:10}. The embedding of multiple autocatalytic cycles in a
network context results in feedback between cycles, giving rise to a rich
repertoire of dynamic behavior and entry-points for regulation and
control. Autocatalysis therefore plays an important role in metabolic
networks.

Already in 2008, Kun and collaborators \cite{Kun:08} published a search for
obligatory autocatalytic species in large metabolic network models.
Using the RAF framework, autocatalytic sets in the metabolic network of
\emph{E.\ coli} were studied in \cite{Sousa:15}. In order to illustrate the
theoretical considerations in the previous sections we survey overall
autocatalytic molecules in the metabolic networks of five very different
prokaryotes as retrieved from the BiGG database \cite{Schellenberger2010},
see Tab.\ \ref{tab:bigg}. We only give a cursory overview here, a full
investigation of autocatalysis using flows in these networks is
forthcoming.

\begin{table*}
\caption{Overview of the investigated BiGG models.  The original networks
  were simplified to study the capabilities of the cytosol compartment.
  The number of molecules that could be detected to be overall
  autocatalytic, using routing constraints in the expanded network are
  listed under ``\#OA''.  When further applying the strict conditions of
  exclusive autocatalysis (``EA'') we are left with the number of molecules
  listed in the final column.}
\label{tab:bigg}
\sisetup{
	table-format=4,
}
\centering
\footnotesize
\begin{tabular}{@{}ll@{}c@{}SS@{}c@{}SS@{}c@{}SS@{}}
\toprule
&	&\hspace*{2em}& \multicolumn{2}{@{}c@{}}{Original}	&\hspace*{2em}& \multicolumn{2}{@{}c@{}}{Simplified}
&\hspace*{2em}	\\
\cmidrule{4-5}\cmidrule{7-8}	
  Species		& BiGG ID && {$|V|$} & {$|E|$} && {$|V|$} & {$|E|$}
  && {\#OA}	& {\#(OA + EA)}	\\
\midrule
  \textit{Escherichia coli}           & iML1515
        && 1877	& 3005 && 1434 & 2188 && 736 & 580 \\
  \textit{Helicobacter pylori}        & iIT341
        &&  485	&  641 &&  485 &  641 && 176 & 143 \\
  \textit{Methanosarcina barkeri}     & iAF692
        &&  626	&  809 &&  626 &  809 && 154 & 131 \\
  \textit{Mycobacterium tuberculosis} & iEK1008
        &&  969	& 1372 &&  969 & 1372 && 459 & 385 \\
  \textit{Staphylococcus aureus}      & iYS854
        && 1129	& 1587 && 1126 & 1579 && 506 & 368 \\
\bottomrule
\end{tabular}
\end{table*}

The BiGG models contain multiple copies of some molecules representing
the compartments cytosol, periplasm, and the external environment. Here, we
are only interested in the cytosolic metabolism. We therefore merged the
periplasm with the external compartment and removed all reactions without
educts or products in the cytosol. The size of the original and
simplified networks are listed in Tab.\ \ref{tab:bigg}. We then obtained
the I/O-constrained hypergraphs interpreting the external molecules as
source and target compounds. Furthermore, the explicit exchange
pseudo-reactions in the models were converted into
source/product specifications.

A molecule can only be (formally or overall) autocatalytic if it appears
both as an educt and as a product, thus emulating that it may accumulate in
the cell. Fixing a molecule \ce{X} of interest, we construct an expanded
flow model in which we add the condition that \ce{X} is overall
autocatalytic as an additional constraint. In total, this yields \num{4640}
different flow models of which \num{2031} had feasible solutions; see Tab.\
\ref{tab:bigg} for a summary. Since many of the solutions in essence
conform to Eq.\ \eqref{eq:counter1} and thus do not represent autocatalysis
in a chemically meaningful sense, we restricted ourselves to overall
autocatalytic molecules that are also exclusively autocatalytic in the
sense of Kun et al.~\cite{Kun:08}.  That is, if a molecule is reachable
from the sources (without itself), then it is not considered autocatalytic.
This leaves \num{1607} solutions.

The intersection of the five models shares 245 cytosolic molecules, of
which 87 are overall autocatalytic. Only the 37 molecules listed in
Tab.\ \ref{tab:autocataMols} are also exclusively autocatalytic.

\begin{table}
\caption{List of the 37 molecules that are (i) present in all five of the
  investigated models, (ii) overall autocatalytic, and (iii) exclusively
  autocatalytic.}
\label{tab:autocataMols}
\centering
\footnotesize
\begin{tabular}[t]{@{}ll@{}}
\toprule
BiGG ID			& Name					\\
\midrule
\texttt{adp}		& ADP					\\
\texttt{amp}		& AMP					\\
\texttt{atp}		& ATP					\\
\texttt{cdp}		& CDP					\\
\texttt{cmp}		& CMP					\\
\texttt{ctp}		& CTP					\\
\texttt{dudp}		& dUDP					\\
\texttt{dump}		& dUMP				        \\
\texttt{dutp}		& dUTP					\\
\texttt{gdp}		& GDP					\\
\texttt{gmp}		& GMP					\\
\texttt{gtp}		& GTP				        \\
\texttt{udp}		& UDP					\\
\texttt{udpg}		& UDPglucose				\\
\texttt{udpgal}		& UDPgalactose				\\
\texttt{ump}		& UMP					\\
\texttt{utp}		& UTP					\\
\midrule
\texttt{nad}		& NAD					\\
\texttt{nadh}		& NADH					\\
\texttt{nadp}		& NADP					\\
\texttt{nadph}		& NADPH					\\
\midrule
\texttt{10fthf}		& 10-Formyltetrahydrofolate 		\\
\texttt{methf}		& 5,10-Methenyltetrahydrofolate		\\
\texttt{mlthf}		& 5,10-Methylenetetrahydrofolate	\\
\texttt{thf}		& 5,6,7,8-Tetrahydrofolate		\\
\texttt{thdp}		& 2,3,4,5-Tetrahydrodipicolinate	\\
\texttt{23dhdp}		& 2,3-Dihydrodipicolinate		\\
\midrule
\texttt{4pasp}		& 4-Phospho-L-aspartate			\\
\texttt{aspsa}		& L-Aspartate 4-semialdehyde		\\
\texttt{phom}		& O-Phospho-L-homoserine		\\
\texttt{pser\_\_L}	& O-Phospho-L-serine			\\
\texttt{gal1p}		& Alpha-D-Galactose 1-phosphate		\\
\texttt{13dpg}		& 3-Phospho-D-glyceroyl phosphate	\\
\texttt{prpp}		& 5-Phospho-alpha-D-ribose 1-diphosphate\\
\texttt{3php}		& 3-Phosphohydroxypyruvate		\\
\texttt{actp}		& Acetyl phosphate		        \\
\texttt{ppi}		& Diphosphate				\\
\bottomrule
\end{tabular}
\end{table}

This list for the most part comprises the expected ``currencies'' in the
cell, in particular the mono-, di-, and tri-phosphorylated nucleotides, and
the redox cofactors NAD and NADP. This matches the identification of
ATP/ADP as ubiquitous ``obligatory autocatalysts'' in \cite{Kun:08} using a
very different approach. Furthermore, several tetrahydropholate
derivatives, which are essential cofactors in the single carbon metabolism
and two prebiotically relevant amino acids aspartate and serine are on the
list. Interestingly, also the non-proteinogenic amino acid homoserine, an
intermediate in the biosynthesis pathways of the three essential amino
acids methionine, threonine, and isoleucine, as well as
aspartate-semialdehyde, a building block involved in the biosynthesis of the
amino acids lysine and homoserine are present.

\subsection*{Structural Constraints on Autocatalysis}

In the preceding section we have reviewed several ways of formalizing
autocatalysis in terms of integer hyperflows. While the comparison of the
different approaches provides many open question for future research, it
also leaves the impression that none of them already provides a
satisfactory theory. Querying for Milo and Nghe flows, for instance,
requires a very loose definition of sources and sinks, and only partially
includes structural constraints. Overall autocatalysis provides much more
flexibility in the source/sink specification and provides solutions
directly interpretable as chemical pathways.  However, even with routing
constraints on expanded flows the solutions are not guaranteed to have the
cyclic motifs we would expect for ``true'' autocatalysis.

\begin{figure}
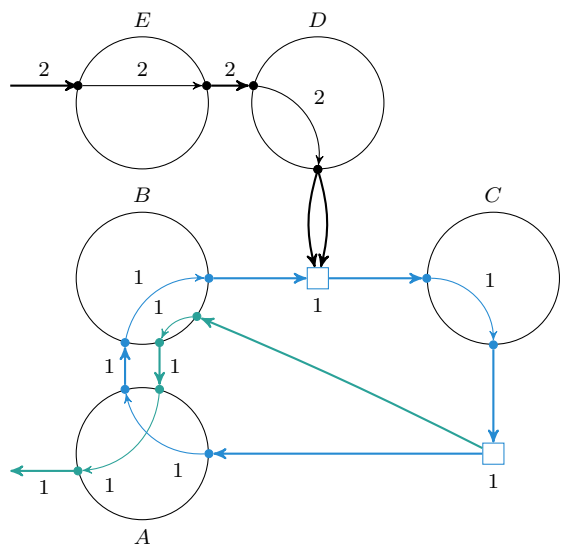

\tikzset{theCycle/.style={color=sol-blue, text=black}}
\tikzset{theEar/.style={color=sol-cyan, text=black}}
\centering
	\autocataExBegin
		\def\labAB{1}\def\labBA{1}
		\def\labED{2}\def\labBC{1}\def\labCAB{1}
	\matrix[row sep=16, column sep=16] {
		\node[hxnode,label=above:$E$] (E) {};		\& \node[hxnode,label=above:$D$] (D) {};       \\
		\node[hxnode, label={[overlay]above:$B$}] (B) {};	\& \node[hedge, theCycle, label=below:{\labBC}] (BC) {}; \& \node[hxnode,label={[overlay]above:$C$}] (C) {};    \\
		\node[hxnode, label=below:$A$] (A) {}; \&\& \node[hedge, theCycle, label=below:{\labCAB}] (CAB) {};       \\
	};
	
	\makeShortcutEdgeStyle{A}{90+\isomerizationOffset}{B}{-90-\isomerizationOffset}{\labAB}{theCycle}
	\makeShortcutEdgeStyle{B}{-90+\isomerizationOffset}{A}{90-\isomerizationOffset}{\labBA}{theEar}
	\makeShortcutEdge{E}{\isomerizationOffset}{D}{180-\isomerizationOffset}{\labED}
	
	\makeEdge{D/-90/\isomerizationOffset, D/-90/-\isomerizationOffset}{BC}{}
	\makeEdgeStyle{B/0/0}{BC}{C/180/0}{theCycle}
	\makeEdgeStyle{C/-90/0}{CAB}{A/0/0}{theCycle}
	\makeEdgeStyle{}{CAB}{B/-35/2}{theEar}
		\def\labAI{}\def\labAO{1}
		\def\labBI{}\def\labBO{}
		\def\labCI{}\def\labCO{}
		\def\labDI{}\def\labDO{}
		\def\labEI{2}\def\labEO{}
	\makeOExpStyle{A}{180}{\labAO}{theEar}
	\makeIExp{E}{180}{\labEI}
	
	\foreach \s/\t/\b in {D-in-sc-E/D-out-BC/-45, E-in-IO/E-out-sc-D/0}{
		\draw[tedge] (t-\s) to [bend right=\b, looseness=1] node[auto, every label] {2} (t-\t);
	}
	\draw[tedge, theEar] (t-A-in-sc-B) to [bend right=-40, looseness=1] node[auto, every label, pos=0.9] {1} (t-A-out-IO);
	\draw[tedge, theCycle] (t-A-in-CAB) to [bend right=-40, looseness=1] node[auto, every label, pos=0.1] {1} (t-A-out-sc-B);
	\draw[tedge, theEar] (t-B-in-CAB) to [bend right=30, looseness=1] node[auto, swap, every label] {1} (t-B-out-sc-A);
	\foreach \s/\t/\b in {B-in-sc-A/B-out-BC/-40, C-in-BC/C-out-CAB/-45} {
		\draw[tedge, theCycle] (t-\s) to [bend right=\b, looseness=1] node[auto, every label] {1} (t-\t);
	}
  \autocataExEnd
  \caption{An expanded flow that contains a catalytic cycle (in
    blue), from which an ``ear'' (in green), produces an additional
    copy of the autocatalytic molecule $A$.}
\label{fig:cycleEar}
\end{figure}

Using the mathematical setup of expanded hypergraphs we can relax the
condition of an autocatalytic vertex to have explicit input, and instead
require that the flow must induce a cycle that goes through any of the
associated vertices in the expanded graph.  More formally, for a flow
$\widetilde{f}$ on the expanded network
$\widetilde{\mathcal{H}} = (\widetilde{V}, \widetilde{E})$, if a vertex
$x\in V$ is to be considered autocatalytic then
$\koenig(\widetilde{\mathcal{H}}[\widetilde{V}, \supp(\widetilde{f})])$
must contain a cycle passing through $x$ (indicated in blue in
  Fig.~\ref{fig:cycleEar}) and an ``ear'' (indicated in green in
  Fig.~\ref{fig:cycleEar}) that branches off the cycle before $x$, rejoins
  the cycle to pass through $x$ and eventually connects to an outflow,
  possibly after additional reactions in parallel with the cycle. This
``ear'' condition may sound deceptively simple, as it directly aligns with
the expectation that the cycle must be productive, but providing a formal
definition requires careful attention.  The cycle condition is
mathematically easy to state, but it is a non-local constraint that may
require a non-trivial computational effort to handle.  We envision that
systems such as the one in Fig.\ \ref{fig:cycleEar} will be a paradigmatic
example of autocatalytic mechanisms. It is worth noting that at least
conceptually this fits with autocatalytic cores of Blokhuis \textit{et
    al.}\ \cite{Nghe:20}.

\section*{Concluding Remarks} 

We cannot claim to have a comprehensive mathematical theory of (structural)
autocatalysis. However, we have a starting point to develop such a theory
and some hints that we can use to guide us into the right direction:
integer hyperflows provide a powerful mathematical framework in which
\emph{some} of the properties of autocatalytic networks can be expressed
very naturally. In addition, it makes the incorporation of stoichiometric
balance conditions very easy and natural.  On the other hand, flows alone
do not seem to be sufficient, since autocatalysis involves pushing material
around in a (generalized) cycle, and thus involves a temporal order of
reactions that -- in general -- is not specified completely by a flow,
which in essence is just a set of reactions. To this end, we have
introduced expanded hypergraphs that encode some of the necessary temporal
ordering. Since flows are by construction a description of a steady state,
we suspect that flows are an inherently incomplete framework, which need to
be complemented by constraints such as the cycle/ear motif sketched in the
previous section that imply temporal order of reactions, i.e., \emph{a
  mechanism}. It remains an interesting mathematical question for
  future research to what extent routing constraints in extended
  hypergraphs imply topological orders of reactions for a given flow.

\bigskip

\fontsize{7}{8}\sffamily

\par\noindent{\bfseries Acknowledgements}

We thank Philippe Nghe for stimulating discussions on autocatalysis during
a joint visit of CERN in October 2019, and for providing the manuscript
``Minimal Autocatalytic Stoichiometries'' to us prior to publication.  This
work was supported in part by the German Federal Ministry of Education and
Research (BMBF) within the project Competence Center for Scalable Data
Services and Solutions (ScaDS) Dresden/Leipzig (BMBF 01IS14014B).
It was also supported by the Independent Research Fund Denmark (DFF-7014-00041.130).

\vspace*{3mm}

\par\noindent{\bfseries Author Contributions}

The authors jointly conceived the study, JLA and PFS developed most of the
mathematical framework, JLA performed the computational analysis of the
metabolic networks. All authors contributed to the manuscript and approved
of its submission. 

\vspace*{3mm}

\bibliographystyle{bmc-mathphys}
\begingroup
\renewcommand{\refname}{\fontsize{7}{8}\bfseries References}
\def\bibfont{\fontsize{7}{8}}
\makeatletter
\def\@biblabel#1{#1.}
\makeatother
\bibliography{kiedro}
\endgroup

\end{document}